\newcommand{\IR}{\mathbb{R}}
\newcommand{\ID}{\mathbb{D}}
\newcommand{\IC}{\mathbb{C}}
\newcommand{\IV}{\mathbb{V}}
\newcommand{\bsA}{{\boldsymbol{A}}}
\newcommand{\bsB}{{\boldsymbol{B}}}
\newcommand{\bsC}{{\boldsymbol{C}}}
\newcommand{\bsD}{{\boldsymbol{D}}}
\newcommand{\bsE}{{\boldsymbol{E}}}
\newcommand{\bsG}{{\boldsymbol{G}}}
\newcommand{\bsH}{{\boldsymbol{H}}}
\newcommand{\bsM}{{\boldsymbol{M}}}
\newcommand{\bsN}{{\boldsymbol{N}}}
\newcommand{\bsP}{{\boldsymbol{P}}}
\newcommand{\bsV}{{\boldsymbol{V}}}
\newcommand{\bsPsi}{{\boldsymbol\Psi}}
\newcommand{\bsPhin}{{\boldsymbol{\Phi_n}}}
\newcommand{\bsPhi}{{\boldsymbol\Phi}}
\newcommand{\bsXi}{{\boldsymbol\Xi}}
\newcommand{\bscE}{\boldsymbol{\mathcal{E}}}
\newcommand{\bscH}{\boldsymbol{\mathcal{H}}}
\newcommand{\bsvareps}{{\boldsymbol\varepsilon}}
\newcommand{\bsmu}{{\boldsymbol\mu}}
\newcommand{\bsAj}{{\boldsymbol{A^j}}}
\newcommand{\bsAl}{{\boldsymbol{A^l}}}
\newcommand{\bsaj}{{\boldsymbol{a^j}}}
\newcommand{\bsal}{{\boldsymbol{a^l}}}
\newcommand{\bsxi}{{\boldsymbol{\xi}}}
\newcommand{\bssig}{{\boldsymbol{\sigma}}}
\newcommand{\cE}{{\mathcal E}}
\newcommand{\cH}{{\mathcal H}}
\newcommand{\dlab}{\langle\!\langle}
\newcommand{\drab}{\rangle\!\rangle}
\newcommand{\nMh}{\nabla_{\!\!\hatt{\mathcal M}}}
\newcommand{\cMh}{\hatt{\mathcal M}}
\newcommand{\Mh}{\hatt{\bsM}}
\newcommand{\nM}{\nabla_{\!\mathcal M}}
\DeclareMathOperator{\supp}{supp}
\newcommand{\beq}{\begin{equation}}
\newcommand{\enq}{\end{equation}}
\newcommand{\Om}{\Omega}
\newcommand{\wti}{\widetilde}
\newcommand{\hatt}{\widehat}
\let\geq\geqslant
\let\leq\leqslant
\def\theequation{\@arabic\c@equation}
\numberwithin{equation}{section}
\newtheorem{theorem}{Theorem}[section]
\newtheorem{lemma}[theorem]{Lemma}
\newtheorem{corollary}[theorem]{Corollary}
\theoremstyle{remark}
\newtheorem{example}[theorem]{Example}
\begin{document}

\title[First order diff. eqns. on $\IR^d$]{On essential self-adjointness for first order differential operators on domains in $\IR^d$}

\author[G.\ Nenciu]{Gheorghe Nenciu}
\address{Gheorghe Nenciu\\Institute of Mathematics ``Simion Stoilow'' of the Romanian Academy\\ 21, Calea Grivi\c tei\\010702-Bucharest, Sector 1\\Romania}
\email{Gheorghe.Nenciu@imar.ro}

\author[I.\ Nenciu]{Irina Nenciu}
\address{Irina Nenciu\\
         Department of Mathematics, Statistics and Computer Science\\ 
         University of Illinois at Chicago\\
         851 S. Morgan Street\\
         Chicago, IL \textit{and} Institute of Mathematics ``Simion Stoilow''
     of the Romanian Academy\\ 21, Calea Grivi\c tei\\010702-Bucharest, Sector 1\\Romania}
\email{nenciu@uic.edu}

\thanks{The research of I.N. is partly supported by NSF
grant DMS-1150427 and by a Simons Foundation Fellowship in mathematics.}

\begin{abstract}
We consider general symmetric systems of first order linear partial differential operators on
domains $\Omega \subset \IR^d$, and we seek sufficient conditions on the 
coefficients which ensure essential self-adjointness. The coefficients of the first order terms 
are only required to belong to $C^1(\Omega)$ and there is no ellipticity 
condition. Our criterion writes as the completeness of an 
associated Riemannian structure which encodes the propagation velocities of the system.
As an application we obtain sufficient conditions for confinement of energy for some wave propagation 
problems of classical physics.
\end{abstract}

\maketitle


\section{Introduction}\label{S:1}

In this note, we consider the essential self-adjointness problem of formally
symmetric first order differential operators on domains $\Om\subset\IR^d$. More precisely, 
consider, on $\Om\subset\IR^d$, the first order differential operator
\begin{equation}\label{E:1.1}
\ID=\bsE^{-1}\bigg(\sum_{j=1}^d \frac12\big(\bsAj D_j+D_j\bsAj\big)+\bsV\bigg)\,
\end{equation}
where $\bsE,\bsAj,\bsV$ are $k\times k$ matrix-valued functions and 
$D_j=-i\frac{\partial}{\partial x_j}$.
Assuming that $\bsE(x)>0$, $\bsAj(x)=\big(\bsAj(x)\big)^*$, $\bsV(x)=\big(\bsV(x)\big)^*$ for all $x\in\Om$,
$\bsE,\bsAj\in C^1(\Om)^{k\times k}$, and $\bsV\in L^2_\text{loc}(\Om)^{k\times k}$,
we obtain that
$\ID$ is symmetric on $C_0^1(\Om)^{k}$ with respect to the ``energy" scalar product
\begin{equation*}
\dlab\bsPhi,\bsPsi\drab_\bsE=\sum_{j=1}^k \int_\Om \overline{\Phi_j(x)}
\big(\bsE(x)\bsPsi\big)_j(x)\,dx\,.
\end{equation*}
The problem is to find sufficient conditions on coefficients $\bsE,\bsAj,\bsV$ which ensure
that $\ID$ is essentially self-adjoint.

 Due to the fact that, by Stone's theorem,
 self-adjointness is equivalent with the existence of unique, norm-preserving
global-in-time solutions of the corresponding evolution equation, 
essential self-adjointness of  symmetric differential operators is an old
and fundamental problem of mathematical physics 
 which naturally has a long and ramified history. This history can be traced from 
\cite{BMS,Br,KSWW,RS}  for second order
partial differential operators, and from \cite{Ch1,Ch2,Th,Wi1} for
first order partial differential operators, not only on domains in $\IR^d$
but also in a general Riemannian setting (see also \cite{NN2,MT, LM, PRS, BFSB, FR, EL} and references therein for some recent developments).
In particular, in a well-known
paper \cite{Ch1} Chernoff proved that all the powers of formally symmetric first
order differential operators with $C^{\infty}$ coefficients on complete Riemannian manifolds
are essential self-adjoint provided the ``velocity of propagation" does not increase
too fast at ``infinity".  In a companion paper \cite{Ch2}, he used the results in \cite{Ch1} to prove essential 
self-adjointness for large classes of Dirac and Schr\"odinger operators with singular potentials.
The main idea of his method is to employ standard results of Friedrichs'
theory of hyperbolic equations to construct local-in-time solutions, and then
to use the method of local energy inequalities \cite{Wi2,Ch1} to
prove the existence, uniqueness, and smoothness of global-in-time solutions.
This in turn gives essential self-adjointness. The heuristic argument behind the proof is that,
under the condition that the velocity of propagation does not increase too fast at infinity, a compactly supported
initial condition remains compactly supported for all $t>0$, 
 and hence the evolution is completely determined by the coefficients of $\mathbb D$. 

Particularizing Chernoff's result to the flat case (i.e. the Riemannian manifold
is just $\IR^d$ with the standard metric), assuming  that $\bsE(x)={\bf{1}}$ and   $\bsAj$, 
$\bsV$  are $C^{\infty}$, one obtains that all the powers of $\mathbb D$ are essential self-adjoint provided
\begin{equation}\label{Ch1}
\int^\infty\frac{1}{b(r)}\,dr=\infty
\end{equation} 
where
\begin{equation}\label{Ch2}
b(r)=\sup_{|x|\leq r} \sup_{\xi\in\IR^d,|\xi|=1} 
\bigg\|\sum_{j=1}^d \xi_j\bsAj(x)\bigg\|_{\mathcal B(\IC^k)}\,.
\end{equation}
Notice that, beside the fact that one obtains essential self-adjointness for all powers of $\mathbb D$ 
and there are no conditions on the behavior of $\bsV$ at infinity, no ellipticity is required. 
Still the result is not completely satisfactory in some respects.  If we are interested only in the essential 
self-adjointness of $\mathbb D$ one expects that much less regularity
of the coefficients is needed and while there are no problems with  lowering the regularity of $\bsV$  \cite{Ch2}, 
Chernoff's method still requires $\bsAj$ to be  $C^{\infty}$. Further,   the  ``isotropic" condition  \eqref{Ch2} 
involves the supremum of $\|\bsAj\|$ on balls and seems stronger that what the heuristics suggests. Finally, the 
method is suited only for ${\mathbb R}^d$ and not for arbitrary domains $\Omega\subset\IR^d$.

The aim of this note is to point out that by using the  classical result of
Friedrichs \cite{Fr} giving the equality of ``weak" and ``strong"
extensions of  first order differential operators one can 
generalize the proof for standard Dirac operator (see e.g. Theorem 4.3 in \cite{Th}) to obtain a satisfactory criterion of essential self-adjointness for $\mathbb D$ on arbitrary domains $\Omega \subset \IR^d$. Our criterion is of the same
type as the one given in \cite{BMS} for essential self-adjointness of  operators of the form
$D^*D$,  where $D$ is a first order elliptic differential operator with smooth coeficients,  namely the completeness of an associated Riemannian structure associated to $D$. More precisely, let $\bsM(x)$  be the real, non-negative definite matrix given by:
\begin{equation*}
\bsM_{jl}(x)=\text{Tr}\big(\bsE(x)^{-1/2}\bsAj(x)\bsE(x)^{-1}\bsAl(x)\bsE(x)^{-1/2}\big)\,.
\end{equation*}
Then our first  result (see Theorem~\ref{T:1}i. below) states that if
$\bsV\in L^\infty_\text{loc}(\Om)^{k\times k}$, and if for some $\hatt\bsM\geq\bsM$,
$0<\hatt\bsM\in C^\infty(\Om)^{k\times k}$, $\Om$ endowed with
the Riemannian metric given by
\begin{equation*}
ds^2=\sum_{j,l=1}^d {\hatt\bsM}^{-1}(x)_{jl}\,dx_jdx_l
\end{equation*}
is complete, then $\ID$ is essentially self-adjoint. Our second result, given in Theorem~\ref{T:1}ii. is the analog in our setting of Theorem 2.1 in \cite{Ch2}.

After our note was ready for submission, we found an old, unnoticed paper by Fattorini \cite{Fa} proving a result implying the essential self-adjointness of $\mathbb D$ for the case  $\Omega = {\mathbb R}^d$,  
$\bsE(x)={\bf{1}}$ ,  $\bsAj$ 
are $C^1$ and $\bsV$ is continuous. His condition about behavior at infinity of $\bsAj$ is similar to 
the Chernoff condition:
\begin{equation}\label{AF1}
\|\bsAj(x)\| \leq \rho(|x|) , 
\end{equation}
where $\rho$ is a positive continuous function growing so slowly to infinity that
\begin{equation}\label{AF2}
\int^\infty\frac{1}{\rho(r)}\,dr=\infty.
\end{equation}
His proof also uses as the essential ingredient the Friedrichs result about identity between strong and weak extensions \cite{Fr, Ho}.

The matrix $\bsM(x)$  encodes the propagation velocities
of the system (see Section ~\ref{S:4} for examples) and may be called the "velocity matrix" associated to $\ID$.
Our condition of completeness of $(\Omega, {\hatt\bsM}^{-1})$ can be viewed as a weaker "anisotropic" form
of the Chernoff and Fattorini conditions (see the end of Section~\ref{S:3} for details).

The content of the paper is as follows. In Section~\ref{S:2} we state and prove our
main result. In Section~\ref{S:3} we give some corollaries and in Section~\ref{S:4}
some applications of Theorem~\ref{T:1}. In particular, we obtain criteria for essential self-adjointness
of first order differential operators describing some wave propagation
phenomena of classical physics \cite{Wi1}. For example in the case of propagation
of acoustical waves in an isotropic inhomogeneous medium at rest, the matrix
$\bsM(x)$ turns out to be
\begin{equation*}
\bsM(x)=2\big(v_P^2(x)+2v_S^2(x)\big){\bf{1}},\quad\text{for }x\in\Omega\,,
\end{equation*}
where $v_P(x)$, $v_S(x)$ are the local  compressional and  shear velocities respectively. Hence, our result is 
indeed the mathematical substantiation of the heuristic argument that the confinement
is nothing but the fact that an initial disturbance inside $\Om$ will never reach the boundary.

We conclude with two remarks. The first one is about optimality. Since the growth condition \eqref{AF2} is optimal in some sense (see the counterexample in \cite{Fa})  it is tempting
to think that the completeness of the associated Riemannian structure
on $\Om$ is also necessary for the essential self-adjointness of $\ID$. However 
this (like for the second order differential operators case)  is not the case. For example, 
the standard free Dirac operator on $\IR^3\setminus\{0\}$ is essentially
self-adjoint on $C_0^1(\IR^3\setminus\{0\})^4$, but in this case
$\bsM(x)=4\cdot{\bf{1}}$. The second remark is that, while we
restricted ourselves to the flat case $\Om\subset\IR^d$, by adding
the necessary technicalities the method of this paper can be used to lift
the results to a more general Riemannian setting.

\section{The main result}\label{S:2}

We consider symmetric first order differential operators on domains in $\Om\subset\IR^d$ 
of the form \eqref{E:1.1} which is a short hand for 
\begin{equation}\label{E:2.4}
\begin{aligned}
(\mathbb D\bsPsi)_l(x)
&=-\frac{i}{2}\sum_{m,n=1}^k\sum_{j=1}^d \bsE^{-1}_{lm}(x)\bigg[\boldsymbol{A^j}_{mn}(x)\,\frac{\partial\bsPsi_n(x)}{\partial x_j} 
  +\frac{\partial}{\partial x_j}\bigg(\boldsymbol{A^j}_{mn}(x)\bsPsi_n(x)\bigg)\bigg]\\
  &\qquad+\sum_{m,n=1}^k \bsE^{-1}_{lm}(x)\bsV_{mn}(x)\bsPsi_n(x)\,.
\end{aligned}
\end{equation}
For a $p\times q$ matrix-valued function $\bsB(x)$, we will abuse notation slightly
and write, for example, $\bsB\in C_0^1(\Om)$ if all of its matrix entries are in $C_0^1(\Omega)$, etc. 
For the rest of the paper, $\bsE(x)$, $\boldsymbol{A^j}(x)$, $j=1,...d$, and $\bsV(x)$ denote $k\times k$ Hermitian matrices,
$\bm{\Phi}(x)$ and $\bm{\Psi}(x)$ denote $k\times 1$ matrices, and $D_j=-i\frac{\partial}{\partial x_j}$ for $j=1,...,d$.

We make the following assumptions:
\begin{equation}\label{E:2.1}
\bsE,\boldsymbol{A^j}\in C^1(\Om)\,;\quad \bsE>0\,;\quad \bsV\in L^2_{loc}(\Om)\,.
\end{equation}
We denote by $\mathcal H_\bsE$ the space obtained by the completion of $\big(C_0^1(\Omega)\big)^k$ in the norm
given by the scalar product
\begin{equation}\label{E:2.5}
\dlab \bsPhi,\bsPsi\drab_{_\bsE}=\int_\Om \langle \bsPhi(x),\bsE(x)\bsPsi(x)\rangle\,dx
\end{equation}
where $\langle\cdot,\cdot\rangle$ is  the standard scalar product in $\mathbb C^k$.
In the particular case when $\bsE= \bf 1$ i.e.  $\mathcal H_\bsE=L^2(\Om)^k$  we omit the subscript and
write $\dlab \cdot, \cdot \drab$ and $\| \cdot\|$ for the scalar product and norm respectively.
For  $\bsN$ a real, positive-definite, $d\times d$ matrix-valued function with $\bsN\in C^\infty(\Omega)$,
we denote by $\mathcal N:=(\Omega, \bsN^{-1})$ the Riemannian structure on $\Omega$ given by
\begin{equation}\label{E:2.7}
ds^2=\sum_{j,l=1}^d \bsN^{-1}_{jl}(x)\,dx_jdx_l\,.
\end{equation} 
In this setting, our main result reads:

\begin{theorem}\label{T:1}
Consider the $k\times k$ matrix-valued functions
$\bsE,\boldsymbol{A^j}\in C^1(\Om)$, $j=1,...,d$, $\bsE>0$, and $\bsV\in L^2_{loc}(\Om)$
with which we define the symmetric operator in $\mathcal H_\bsE$:
\begin{equation}\label{E:2.3}
\mathbb D=\tfrac12 \bsE^{-1}\big(\bsA\cdot\bsD+\bsD\cdot\bsA\big)+\bsE^{-1}\bsV:= \mathbb D_0+\mathbb V\,,\quad\mathcal D(\ID)=\big\{\bsPsi\,\big|\, \bsPsi\in C_0^1(\Omega)\big\}\,.
\end{equation}
Let $\bsM$ be the $d\times d$ real non-negative  matrix-valued function on $\Om$ given by
\begin{equation}\label{E:2.6}
\bsM_{jl}={\rm{Tr}}\big(\bsE^{-1/2}\boldsymbol{A^j}\bsE^{-1}\boldsymbol{A^l}\bsE^{-1/2}\big)\,,\quad j,l=1,\dots,d\,.
\end{equation}
 Assume that there exists a real, positive-definite, $d\times d$ matrix-valued function $\hatt \bsM\in C^\infty(\Omega)$,
such that $\hatt{\mathcal M}$ is a complete Riemannian manifold and
\begin{equation}\label{E:2.7.1}
\bsM(x)\leq \Mh(x)\qquad\text{for all }x\in\Om\,.
\end{equation} 
Then:
\begin{itemize}
\item[i.]  
If $\bsV\in L^\infty_\text{loc}(\Om)$, then $\ID=\ID_0+\IV$ is essentially self-adjoint.
\item[ii.] Let $\bsV =\boldsymbol{V_1}+\boldsymbol{V_2}$, $\boldsymbol{V_1}, \boldsymbol{V_2}\in L^2_\text{loc}(\Om)$
and assume that for any compact set $K\subset\Om$, we have that $\ID_0+\IV_1+\chi_K\IV_2$ is
essentially self-adjoint. Here, as usual, $\chi_K$ is the characteristic function of the set $K$.
Then $\ID_0+\IV_1+\IV_2$ is essentially self-adjoint.
\end{itemize}
\end{theorem}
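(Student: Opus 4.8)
The plan is to prove both parts through the deficiency criterion: $\ID$ is essentially self-adjoint iff $\ker(\ID^*\mp i)=\{0\}$, so it suffices to show that every $u\in\mathcal D(\ID^*)$ with $\ID^*u=iu$ (the case $-i$ being identical) vanishes. Since $\ID$ is formally symmetric, $\ID^*$ is the distributional (weak) extension, so such a $u$ satisfies $\ID u=iu$ in the sense of distributions. The engine of the argument is a family of cutoffs built from the complete metric $\cMh$: fixing $x_0\in\Om$ and writing $\rho(x)=\dist_{\cMh}(x_0,x)$, set $\phi_n=\zeta(\rho/n)$ with $\zeta\in C^\infty([0,\infty))$, $\zeta\equiv1$ on $[0,1]$, $\zeta\equiv0$ on $[2,\infty)$, $0\le\zeta\le1$. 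Completeness of $\cMh$ enters twice: by Hopf--Rinow the sublevel sets $\{\rho\le r\}$ are compact subsets of $\Om$, so each $\phi_n$ has compact support in $\Om$ (the rigorous form of ``a disturbance never reaches $\dOm$''); and the Riemannian distance obeys the eikonal bound $\sum_{j,l}\Mh_{jl}\,\partial_j\rho\,\partial_l\rho\le1$ a.e. (Here $\rho$ is only Lipschitz, so $\phi_n$ is Lipschitz with $\nabla\phi_n\in L^\infty$, which is all that is used below.)

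The heart of the matter is a commutator estimate. Since $\ID_0$ is first order and $\IV$ is a multiplication operator, a direct computation gives, for scalar $\phi$, the zeroth-order identity $[\ID,\phi]=[\ID_0,\phi]=-i\,\bsE^{-1}\big(\sum_{j}(\partial_j\phi)\bsAj\big)$. Computing its operator norm on $\cH_\bsE$ (on which multiplication by $\bsB$ has norm $\sup_x\|\bsE^{1/2}\bsB\bsE^{-1/2}\|$) and bounding the operator norm by the Hilbert--Schmidt norm, definition \eqref{E:2.6} of $\bsM$ yields $\|[\ID_0,\phi]\|_{\cH_\bsE}^2\le\sup_x\sum_{j,l}(\partial_j\phi)(\partial_l\phi)\bsM_{jl}$. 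Invoking the hypothesis $\bsM\le\Mh$ together with $\partial_j\phi_n=n^{-1}\zeta'(\rho/n)\,\partial_j\rho$ and the eikonal bound gives $\|[\ID_0,\phi_n]\|_{\cH_\bsE}\le\|\zeta'\|_\infty/n\to0$. This single estimate is where the geometric content of the theorem — the interplay of the velocity matrix $\bsM$, its smooth majorant $\Mh$, and completeness — is concentrated.

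For part i., I combine this with the classical Friedrichs theorem on the equality of weak and strong extensions. Because $\phi_n^2u$ has compact support in $\Om$ and, by the commutator formula, $\ID(\phi_n^2u)=i\phi_n^2u+[\ID_0,\phi_n^2]u\in\cH_\bsE$ (here $\bsV\in L^\infty_\text{loc}$ makes $\IV$ bounded on this support, so the potential is a harmless bounded perturbation), Friedrichs' result places $\phi_n^2u\in\mathcal D(\overline{\ID})$. The adjoint pairing $\overline{\ID}=(\ID^*)^*$ together with $\ID^*u=iu$ then yields the energy identity $2i\,\norm{\phi_n u}_\bsE^2=\dlab[\ID_0,\phi_n^2]u,u\drab_\bsE$. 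Since $[\ID_0,\phi_n^2]=2\phi_n[\ID_0,\phi_n]$, the commutator bound forces $\norm{\phi_n u}_\bsE^2\le(\|\zeta'\|_\infty/n)\norm{u}_\bsE^2$, and letting $n\to\infty$ (so $\phi_n\uparrow1$ and $\norm{\phi_n u}_\bsE\to\norm{u}_\bsE$ by monotone convergence) gives $u=0$.

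For part ii., the potential $\IV_2$ is too singular for the Friedrichs step, so I replace it by the hypothesis. For each $n$ choose a compact $K=K_n$ with $\supp\phi_n\subset K_n^\circ$, and put $\ID_K=\ID_0+\IV_1+\chi_K\IV_2$, essentially self-adjoint by assumption, hence $\overline{\ID_K}=\ID_K^*$. On $\supp\phi_n\subset K$ one has $\chi_K\IV_2=\IV_2$, so $\ID_K(\phi_n^2u)=i\phi_n^2u+[\ID_0,\phi_n^2]u\in\cH_\bsE$ and therefore $\phi_n^2u\in\mathcal D(\ID_K^*)=\mathcal D(\overline{\ID_K})$. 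I expect the \emph{main obstacle} to be that one cannot simply pair through $\ID_K^*$, because $u$ need not lie in $\mathcal D(\ID_K^*)$: the tail $(1-\chi_K)\IV_2u$ is only $L^1_\text{loc}$. The remedy is to approximate $w=\phi_n^2u$ in the graph norm of $\overline{\ID_K}$ by $\psi_m\in C_0^1(\Om)$ and to truncate with a cutoff $\eta\equiv1$ on $\supp\phi_n$, $\supp\eta\subset K_n^\circ$; the truncated functions $\eta\psi_m$ may be taken supported in a fixed compact subset of $K$, where $\ID\psi=\ID_K\psi$ as multiplication by $(1-\chi_K)\IV_2$ annihilates them. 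This localization delivers $\dlab w,\ID^*u\drab_\bsE=\dlab\overline{\ID_K}w,u\drab_\bsE$ and hence the identical energy identity, so the same limiting argument yields $u=0$ and proves essential self-adjointness of $\ID_0+\IV_1+\IV_2$.
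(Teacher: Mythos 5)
Your proposal is correct and takes essentially the same route as the paper: you eliminate deficiency elements by means of cutoffs supplied by the completeness of $\cMh$, control the commutator through the Hilbert--Schmidt bound $\|\bssig(\cdot,\nabla\phi)\|^2\le\langle\nabla\phi,\bsM\nabla\phi\rangle\le\langle\nabla\phi,\Mh\nabla\phi\rangle$, and use Friedrichs' weak-equals-strong theorem (resp., in part ii., the assumed essential self-adjointness of $\ID_0+\IV_1+\chi_K\IV_2$) to place the truncated weak solution in the domain of the relevant closure, exactly as in the paper's Steps 1--3. The one point to tidy is that your cutoffs $\phi_n=\zeta(\rho/n)$ are only Lipschitz while $\mathcal D(\ID)=C_0^1(\Om)^k$, so the commutator identities and the membership $\phi_n^2u\in\mathcal D(\ID_K^*)$ need either a mollification of $\phi_n$ (the paper sidesteps this by quoting Shubin's Gaffney lemma, which provides smooth $G_p$ with the same three properties) or a separate justification of the Leibniz rule for Lipschitz multipliers; apart from that, your direct treatment of general $\bsE$ inside $\mathcal H_\bsE$ and your concluding energy identity $2i\|\phi_nu\|_\bsE^2=\dlab[\ID_0,\phi_n^2]u,u\drab_\bsE$ are sound and replace, respectively, the paper's unitary reduction (Step 4) and its inequality $\|(\bar\ID_0-i)G_p\bsPsi\|\ge\|G_p\bsPsi\|$.
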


To prepare the proof of Theorem \ref{T:1}, we start by considering the case $\bsE\equiv \bf 1$ 
and $\bsV\equiv 0$.
In this simpler context, we will need a series of technical results, presented below.
Recall that in this simplified case, we have that
\begin{equation}\label{E:2.8}
\ID=\ID_0=\frac12\big(\bsA\cdot\bsD+\bsD\cdot\bsA\big)\quad\text{with } 
\mathcal D(\ID_0)=C_0^1(\Om)^k
\end{equation}
is a symmetric operator in $\mathcal H=L^2(\Om)^k$. We note here that it will sometimes be useful to write 
$\ID_0$ in its canonical form, namely
\begin{equation}\label{E:2.9}
\ID_0=\bsA\cdot\bsD-\frac{i}{2}\,\boldsymbol{\partial}\cdot\bsA
\end{equation}
where $\boldsymbol{\partial}\cdot\bsA$ is the operator of multiplication with
$\sum_{j=1}^d \frac{\partial \boldsymbol{A^j}(x)}{\partial x_j}$. In particular the symbol of $\ID_0$
is the Hermitian matrix
\begin{equation}\label{E:2.10}
\bssig(x,\xi)=\bsA(x)\cdot\xi\,,\qquad \xi\in\IR^d\,.
\end{equation}
For the case at hand, 
\begin{equation}\label{E:2.11}
\bsM_{jl}(x)=\text{Tr}\big(\boldsymbol{A^j}(x)\boldsymbol{A^l}(x)\big)\,.
\end{equation}

With the assumptions above, we have that $\bsM(x)$ is real and
for $\xi\in\IC^d$, we have
\begin{equation}\label{E:2.12}
\langle \xi, \bsM(x)\xi\rangle={\rm{Tr}}\big(\bssig(x,\xi)^*\bssig(x,\xi)\big)\,.
\end{equation}
In particular, we find from \eqref{E:2.12} that $\bsM(x)$ is non-negative definite for
all $x\in\Om$. Consider now a scalar-valued function $f$ on $\Om$. 
By a slight abuse of notation, we will also denote by $f$ the operator
of multiplication with $f\,{\bf{1}}_{k\times k}$. If $f\in C^1(\Om)$ with $f(x)=\overline{f(x)}$ for 
all $x\in\Om$, then on $C_0^1(\Om)$
\begin{equation}\label{E:2.13}
\big[f,\ID_0\big]=i\bssig\big(\cdot,\nabla f\big)\,.
\end{equation}

Let now, as in Theorem \ref{T:1}, $\hatt \bsM$ be smooth and such that $\hatt \bsM(x)>0$ and 
$\hatt \bsM(x)\geq \bsM(x)$. Recall that, in the Riemannian structure $\hatt{\mathcal M}=(\Omega, \hatt \bsM^{-1})$,
$\nMh f=\hatt \bsM \nabla f$ and the norm is
\begin{equation}\label{E:2.14}
\big|\nMh f\big|_{_{\hatt{\mathcal M}}}^2=\big\langle\nabla f,\hatt \bsM \nabla f\big\rangle\,.
\end{equation}
The following simple inequality is crucial to what follows.

\begin{lemma}\label{L:3}
Let $f=\bar f\in C^1(\Om)$ be a scalar-valued function. Then
\begin{equation}\label{E:2.15}
\big\|\bssig(x,\nabla f(x))\big\|^2_{\mathcal B(\IC^k)}\leq \big|\nMh f(x)\big|_{\hatt{\mathcal M}}^2
\qquad\text{for all }x\in\Om\,.
\end{equation}
\end{lemma}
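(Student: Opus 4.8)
The plan is to reduce the estimate to a single linear-algebra inequality relating the operator norm of a Hermitian matrix to its trace. First I would observe that, since $f$ is real-valued, the gradient $\nabla f(x)\in\IR^d$ is real, so that $\bssig(x,\nabla f(x))=\sum_{j=1}^d\boldsymbol{A^j}(x)\,\tfrac{\partial f}{\partial x_j}(x)$ is a Hermitian $k\times k$ matrix, being a real linear combination of the Hermitian matrices $\boldsymbol{A^j}(x)$. This is the only place where the hypothesis $f=\bar f$ is used, and it is essential: it is what makes the symbol self-adjoint at the point $\xi=\nabla f(x)$.

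The crux is then the following elementary fact: for any Hermitian matrix $H$ on $\IC^k$ with eigenvalues $\lambda_1,\dots,\lambda_k$, one has $\|H\|_{\mathcal B(\IC^k)}^2=\max_i\lambda_i^2\le\sum_i\lambda_i^2=\Tr(H^*H)$, i.e.\ the operator (largest-eigenvalue) norm of a self-adjoint matrix is dominated by its Hilbert--Schmidt norm. Applying this with $H=\bssig(x,\nabla f(x))$ and invoking the identity \eqref{E:2.12} with the real argument $\xi=\nabla f(x)$, I obtain
\begin{equation*}
\big\|\bssig(x,\nabla f(x))\big\|_{\mathcal B(\IC^k)}^2\le\Tr\big(\bssig(x,\nabla f(x))^*\,\bssig(x,\nabla f(x))\big)=\big\langle\nabla f(x),\bsM(x)\,\nabla f(x)\big\rangle\,.
\end{equation*}

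Finally I would invoke the hypothesis $\bsM(x)\le\Mh(x)$ from \eqref{E:2.7.1}, which upgrades the right-hand side to $\langle\nabla f(x),\Mh(x)\,\nabla f(x)\rangle$; by the definition \eqref{E:2.14} of the Riemannian norm this last quantity equals $\big|\nMh f(x)\big|_{\hatt{\mathcal M}}^2$, which yields the claim. I do not expect a genuine obstacle here: the entire content lies in recognizing that the operator norm of a self-adjoint matrix is bounded by its Hilbert--Schmidt norm, and that the latter is precisely what the velocity matrix $\bsM$ encodes through the trace in \eqref{E:2.11}--\eqref{E:2.12}. The remaining steps are just unwinding the definitions of $\bssig$, $\bsM$, and the metric $\hatt{\mathcal M}$, together with the monotonicity of the quadratic form under the matrix ordering \eqref{E:2.7.1}.
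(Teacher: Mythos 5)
Your proposal is correct and follows essentially the same route as the paper's own proof: Hermitian-ness of $\bssig(x,\nabla f(x))$ from $f=\bar f$, the bound of the operator norm by the Hilbert--Schmidt norm (i.e.\ $\|H\|^2\le\Tr H^2$ for Hermitian $H$), the identity \eqref{E:2.12}, and finally the ordering \eqref{E:2.7.1} together with \eqref{E:2.14}. Nothing further is needed.
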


\begin{proof}
Since $f$ is real-valued, it follows that $\bssig\big(x,\nabla f(x)\big)$ is Hermitian. Together with
\eqref{E:2.10}, \eqref{E:2.12}, and recalling the condition that $\hatt \bsM\geq \bsM$, this implies that
for any $x\in\Om$:
\begin{equation*}
\begin{aligned}
\big\|\bssig(x,\nabla f(x))\big\|^2_{\mathcal B(\IC^k)}&\leq \text{Tr}\, \bssig(x,\nabla f(x))^2=\langle \nabla f(x), \bsM(x)\nabla f(x)\rangle\\
&\leq \langle \nabla f(x), \hatt \bsM(x)\nabla f(x)\rangle=\big|\nMh f(x)\big|_{_{\hatt{\mathcal M}}}^2\,,
\end{aligned}
\end{equation*}
as claimed.
\end{proof}

The assumption that $\cMh$ is complete is used via the following lemma, giving the existence of
a Gaffney-type set of cut-off functions (see, e.g., Proposition 4.1 in \cite{Sh})

\begin{lemma}\label{L:4}
 There exists a sequence
of functions $G_p\,:\, \cMh\to[0,1]$, $p\geq 1$, such that:
\begin{itemize}
\item[i.] $G_p\in C_0^\infty(\cMh)$;
\item[ii.] For every compact $K\subset \cMh$ there exists $p_K\geq 1$ such that
\begin{equation*}
G_p\big|_{_K}=1\qquad\text{for all } p\geq p_K\,;
\end{equation*}
\item[iii.] $\lim_{p\to\infty}\sup_{x\in \cMh} \big|\nabla_{\!\!\cMh} G_p(x)\big|_{_{\cMh}}=0$.
\end{itemize}
\end{lemma}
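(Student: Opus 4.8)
The plan is to build the $G_p$ from a smoothing of the Riemannian distance function, exploiting completeness of $\cMh$ in the form of the Hopf--Rinow theorem. Fix a base point $o\in\Omega$ and let $r(x)=\dist_{\cMh}(o,x)$ be the geodesic distance in the metric $\Mh^{-1}$. Since $\cMh$ is complete, Hopf--Rinow guarantees that every closed metric ball $\{x\in\Omega : r(x)\le R\}$ is compact; in particular $r$ is a proper function on $\Omega$. Moreover $r$ is globally Lipschitz and satisfies the eikonal bound $\big|\nMh r\big|_{\cMh}\le 1$ wherever it is differentiable, which by Rademacher's theorem is almost everywhere.

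The only genuine difficulty is that $r$ is merely Lipschitz — it fails to be smooth on the cut locus of $o$ and at $o$ itself — whereas property~i.\ requires $C^\infty$ regularity. I would remove this obstruction by the standard smoothing of the distance function: one produces $\rho\in C^\infty(\Omega)$ with
\[
|\rho(x)-r(x)|\le 1\quad\text{and}\quad\big|\nMh\rho(x)\big|_{\cMh}\le 2\qquad\text{for all }x\in\Omega.
\]
Such a $\rho$ is obtained by mollifying $r$ at a scale small compared with the local geometry, using a locally finite cover by coordinate charts with a subordinate partition of unity to patch the local mollifications together; the bound $\big|\nMh r\big|_{\cMh}\le 1$ then passes to $\rho$ up to an arbitrarily small error absorbed into the constant $2$, and $\rho$ inherits properness from $r$ through the first displayed inequality. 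This is exactly the content behind Proposition~4.1 in \cite{Sh}, and it is the step I expect to be the main obstacle; everything else is routine verification.

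It remains to cut off. Fix $\phi\in C^\infty(\IR)$ with $0\le\phi\le 1$, $\phi\equiv 1$ on $(-\infty,1]$, $\phi\equiv 0$ on $[2,\infty)$, and $|\phi'|\le 2$, and set $G_p(x)=\phi\big(\rho(x)/p\big)$ for $p\ge 1$. For~i., $G_p$ is smooth as a composition of smooth maps, and $\supp G_p\subset\{\rho\le 2p\}\subset\{r\le 2p+1\}$, a compact set, so $G_p\in C_0^\infty(\cMh)$. For~ii., any compact $K\subset\cMh$ lies in some $\{r\le R\}$, hence in $\{\rho\le R+1\}$; choosing the integer $p_K=\lceil R\rceil+1$ gives $\rho/p\le 1$ on $K$, whence $G_p\big|_{K}=1$ for all $p\ge p_K$. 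For~iii., the chain rule yields $\nMh G_p=\tfrac1p\,\phi'(\rho/p)\,\nMh\rho$, so that
\[
\sup_{x\in\cMh}\big|\nMh G_p(x)\big|_{\cMh}\le\frac{1}{p}\,\|\phi'\|_\infty\,\sup_{x\in\Omega}\big|\nMh\rho(x)\big|_{\cMh}\le\frac{4}{p},
\]
which tends to $0$ as $p\to\infty$. This establishes all three properties.
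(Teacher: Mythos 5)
Your construction is correct and is precisely the standard argument behind the result the paper invokes: the paper does not prove Lemma~\ref{L:4} at all but simply cites Proposition~4.1 of \cite{Sh}, and your proof (smoothing the distance function from a base point, using Hopf--Rinow for properness, then rescaling a fixed cutoff) is the proof of that cited proposition. The only step you leave as a black box, the existence of a smooth proper $\rho$ with $|\rho-r|\le 1$ and $\big|\nMh\rho\big|_{\cMh}\le 2$, is exactly the content of the reference, so nothing is missing relative to what the paper itself supplies.
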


As already discussed in the Introduction, our proof uses in an essential way the famous Friedrichs result giving the equality 
between strong and weak extensions of $\ID_0$. More precisely, let $\mathcal D(\ID_0^*)$ be the domain of the adjoint
$\ID_0^*$ of of $\ID_0$ i.e.
\begin{equation}\label{E:2.16}
\mathcal D(\ID_0^*)=
\big\{\bsPsi\,\big|\,\exists\, C_\bsPsi>0\text{ such that } |\dlab \ID_0\bsPhi,\bsPsi\drab|\leq C_\bsPsi\|\bsPhi\|\text{ for all }\bsPhi\in C_0^1(\Omega)\big\}
\end{equation}
and
\begin{equation}\label{E:2.17}
\begin{aligned}
\mathcal F=\big\{\bsPsi\,\big|\,&\exists\, \bsXi\in L^2(\Om)^k\text{ such that for every }\Om'\subset\!\subset\Om,
\exists \;  \text{a sequence} \; (\bsPhin)_n \subset C_0^1(\Om)\text{ with}\\ 
&\lim_{n\to\infty} \|(\bsPhin-\bsPsi)\chi_{\Om'}\|=0\text{ and } \lim_{n\to\infty} \|(\ID_0\bsPhin-\bsXi)\chi_{\Om'}\|=0\big\}.
\end{aligned}
\end{equation}
Here, the notation $\Om'\subset\!\subset\Om$ is used to denote the fact that $\Om'$ is a proper subset of $\Om$, 
i.e. that there exists a compact $K$ such that $\Om'\subset K\subset\Om$.

It is easy to see that $\mathcal F \subset\mathcal D(\ID_0^*)$.
Indeed, let $\bsPsi\in \mathcal F$. Fix $\Om'\subset\!\subset\Om$, take
$\bsPhi\in C_0^1(\Om)$ with $\text{supp}\,\bsPhi\subset\Om'$, and $(\bsPhin)_n,\bsXi$
as in\eqref{E:2.17}. Then
\begin{equation*}
\begin{aligned}
\dlab \ID_0\bsPhi,\bsPsi\drab
&=\dlab \ID_0\bsPhi,\bsPsi-\bsPhin\drab+\dlab \ID_0\bsPhi,\bsPhin\drab\\
&=\dlab \ID_0\bsPhi,\bsPsi-\bsPhin\drab+\dlab\bsPhi,\ID_0\bsPhin-\bsXi\drab+\dlab\bsPhi,\bsXi\drab\\
&=\dlab \chi_{\Om'}\ID_0\bsPhi,\bsPsi-\bsPhin\drab+\dlab\chi_{\Om'}\bsPhi,\ID_0\bsPhin-\bsXi\drab+\dlab\bsPhi,\bsXi\drab\\
&=\dlab \ID_0\bsPhi,(\bsPsi-\bsPhin)\chi_{\Om'}\drab+\dlab\bsPhi,(\ID_0\bsPhin-\bsXi)\chi_{\Om'}\drab+\dlab\bsPhi,\bsXi\drab\\
&\to \dlab\bsPhi,\bsXi\drab \quad\text{as }n\to\infty\,.
\end{aligned}
\end{equation*}
As $\Om'$ was arbitrary, this implies that
\begin{equation*}
\dlab \ID_0\bsPhi,\bsPsi\drab=\dlab\bsPhi,\bsXi\drab\quad\text{for all }\bsPhi\in C_0^1(\Om)\,,
\end{equation*}
and hence $\bsPsi\in\mathcal D(\ID_0^*)$, as claimed.

The remarkable fact proved by Friedrichs is that:
\begin{theorem}[\cite{Fr}]\label{T:M}
\begin{equation}\label{E:2.18}
\mathcal F=\mathcal D(\ID_0^*)\,.
\end{equation}
\end{theorem}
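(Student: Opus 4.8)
The inclusion $\mathcal F\subseteq\mathcal D(\ID_0^*)$ was already verified above, so the plan is to establish the reverse inclusion $\mathcal D(\ID_0^*)\subseteq\mathcal F$ by a Friedrichs mollification argument. I would take $\bsPsi\in\mathcal D(\ID_0^*)$ and set $\bsXi:=\ID_0^*\bsPsi\in L^2(\Om)^k$; by definition of the adjoint this means exactly that $\ID_0\bsPsi=\bsXi$ in the sense of distributions on $\Om$. Fix a standard mollifier $J_\e$, that is, convolution with $\rho_\e:=\e^{-d}\rho(\cdot/\e)$ where $\rho\in C_0^\infty$, $\int\rho=1$, and put $\bsPsi_\e:=J_\e\bsPsi$. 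For $\e$ smaller than $\dist(\Om',\partial\Om)$ this is a well-defined $C^\infty$ function on a neighborhood of any fixed $\Om'\subset\!\subset\Om$, and $\bsPsi_\e\to\bsPsi$ in $L^2(\Om')$ by the elementary properties of mollifiers.

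The heart of the matter is to show $\ID_0\bsPsi_\e\to\bsXi$ in $L^2_{\loc}$. I would write $\ID_0\bsPsi_\e=J_\e\bsXi+\big[\ID_0,J_\e\big]\bsPsi$, where the first term converges to $\bsXi$ in $L^2(\Om')$, again by elementary mollifier theory. Recalling the canonical form \eqref{E:2.9}, the commutator splits as $[\ID_0,J_\e]=[\bsA\cdot\bsD,J_\e]-\tfrac i2[\boldsymbol{\partial}\cdot\bsA,J_\e]$. The second piece is the commutator of $J_\e$ with multiplication by the continuous matrix $\sum_j\partial_j\bsAj$, and since $J_\e(g\bsPsi)-gJ_\e\bsPsi\to0$ in $L^2_{\loc}$ for any continuous $g$ and any $\bsPsi\in L^2_{\loc}$, this piece is negligible.

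The remaining first-order piece is the main obstacle, and it is exactly controlled by the classical Friedrichs commutator lemma: for $\bsAj\in C^1(\Om)$ and $\bsPsi\in L^2_{\loc}(\Om)$ one has $\big\|\bsAj\,\partial_j(J_\e\bsPsi)-J_\e(\bsAj\,\partial_j\bsPsi)\big\|_{L^2(\Om')}\to0$ as $\e\to0$. I would prove this by rewriting the commutator, after an integration by parts, as an integral operator whose kernel is proportional to $\big(\bsAj(x)-\bsAj(y)\big)\,\partial_j\rho_\e(x-y)$ together with a harmless term carrying $\partial_j\bsAj$; the hypothesis $\bsAj\in C^1$ gives $\bsAj(x)-\bsAj(y)=O(|x-y|)$, which combines with the $O(\e^{-d-1})$ size and $\e$-scale support of $\partial_j\rho_\e$ to dominate the kernel by a fixed, $L^1$-normalized mollifier. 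A Schur/Young estimate then yields uniform $L^2$ boundedness and convergence to $0$. Combining the three pieces gives $\ID_0\bsPsi_\e\to\bsXi$ in $L^2(\Om')$.

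Finally, to produce genuinely compactly supported approximants I would cut off: choose $\chi\in C_0^\infty(\Om)$ with $\chi\equiv1$ on $\Om'$, and set $\bsPhin:=\chi\,\bsPsi_{\e_n}$ for a sequence $\e_n\downarrow0$ small enough that $\bsPsi_{\e_n}$ is defined on $\supp\chi$, so that $\bsPhin\in C_0^1(\Om)$. By \eqref{E:2.13} we have $[\ID_0,\chi]=-i\bssig(\cdot,\nabla\chi)$, which vanishes on $\Om'$ since $\nabla\chi=0$ there, while $\chi\equiv1$ on $\Om'$; hence $\bsPhin=\bsPsi_{\e_n}$ and $\ID_0\bsPhin=\ID_0\bsPsi_{\e_n}$ on $\Om'$. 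Consequently $\|(\bsPhin-\bsPsi)\chi_{\Om'}\|\to0$ and $\|(\ID_0\bsPhin-\bsXi)\chi_{\Om'}\|\to0$. Since $\Om'\subset\!\subset\Om$ was arbitrary and $\bsXi$ is a single global $L^2$ function independent of $\Om'$, this shows $\bsPsi\in\mathcal F$ and completes the proof.
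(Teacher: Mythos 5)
The paper itself gives no proof of this statement: it is quoted from Friedrichs \cite{Fr} (see also \cite{Ho}), and the only thing actually proved in the text is the elementary inclusion $\mathcal F\subset\mathcal D(\ID_0^*)$, which you correctly take as already done. What you have written is a reconstruction of the classical mollification proof of the hard inclusion, and your overall architecture --- pass to the distributional identity $\ID_0\bsPsi=\bsXi$, mollify, reduce everything to a commutator estimate with $J_\e$, then cut off with $\chi\in C_0^\infty(\Om)$ equal to $1$ on $\Om'$ and use that $[\ID_0,\chi]=-i\bssig(\cdot,\nabla\chi)$ vanishes on $\Om'$ --- is exactly the right one. It also explains why only the local statement (convergence on each $\Om'\subset\!\subset\Om$, with one global $\bsXi$) is obtainable, which is precisely what \eqref{E:2.17} asks for.

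There is, however, one step where the argument as written does not close. After the integration by parts, the commutator $[\bsAj\partial_j,J_\e]\bsPsi$ splits into the integral operator $T_\e$ with kernel $\big(\bsAj(x)-\bsAj(y)\big)\partial_{x_j}\rho_\e(x-y)$ plus the term $J_\e\big((\partial_j\bsAj)\bsPsi\big)$. Neither piece tends to zero separately: the second converges in $L^2_{\loc}$ to $(\partial_j\bsAj)\bsPsi$, and correspondingly $T_\e\bsPsi\to -(\partial_j\bsAj)\bsPsi$ (test on smooth $\bsPsi$ and use $\int z_i\,\partial_j\rho_\e(z)\,dz=-\delta_{ij}$); only the \emph{sum} vanishes in the limit. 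So the Schur/Young bound on the kernel cannot by itself ``yield convergence to $0$'' --- it yields only that the commutators form a uniformly bounded family on $L^2(\Om')$. The missing (standard) step is the density argument: verify by direct computation that the full commutator tends to zero when $\bsPsi$ is, say, $C^1$, and then use the uniform bound to extend to all $\bsPsi\in L^2_{\loc}$. With that insertion your proof is complete and is, in substance, the argument of \cite{Fr,Ho}.
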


We now have the ingredients for the proof of Theorem~\ref{T:1}.
\begin{proof}[Proof of Theorem~\ref{T:1}] The proof proceeds in several steps.

\noindent
\textit{\underline{Step 1.} Part i. for $\bsE\equiv \bf 1$ and $\bsV\equiv 0$.}
Let $\bsPsi,\wti\bsPsi$ be weak solutions of $(\ID_0-i)\bsPsi=0$ and
$(\ID_0+i)\wti\bsPsi=0$, respectively i.e.
\begin{equation}\label{E:2.19}
\dlab(\ID_0+i)\bsPhi,\bsPsi\drab=\dlab(\ID_0-i)\bsPhi,\wti\bsPsi\drab=0\quad\text{for all }\bsPhi\in C_0^1(\Om)^k\,.
\end{equation}
From the basic criterion for (essential) self-adjointness \cite{AG,RS}, it is sufficient to prove that $\bsPsi=\wti\bsPsi=0$. 

Let $G_p$, $p=1,2,...$ as given by Lemma~\ref{L:4} . From Lemmas~\ref{L:3} and \ref{L:4} the operator of multiplication with 
$\bssig\big(\cdot,\nabla G_p\big)$ is bounded in $L^2(\Om)^k$ and
\begin{equation}\label{E:2.52}
\big\|\bssig(\cdot,\nabla G_p(\cdot))\big\|\leq \sup_{x\in\Om} \big|\nM G_p(x)\big|_{_{\mathcal M}}\,.
\end{equation}
From \eqref{E:2.13} :
\begin{equation}\label{E:2.20}
\big[G_p,\ID_0\big]=i\bssig\big(\cdot,\nabla G_p\big)\quad\text{on } C_0^1(\Om)^k\,.
\end{equation}

From \eqref{E:2.20},  \eqref{E:2.19} and the fact that $G_p\bsPhi \in C_0^1(\Om)^k$:
\begin{equation}\label{E:2.21}
\dlab(\ID_0+i)\bsPhi, G_p\bsPsi\drab=\dlab G_p(\ID_0-i)\bsPhi, \bsPsi\drab=
\dlab [G_p,\ID_0]\bsPhi, \bsPsi\drab=\dlab\bsPhi,{\bf F_p}\drab\ \,
\end{equation}
with
\begin{equation}\label{E:2.22}
{\bf F_p}=-i\bssig\big(\cdot,\nabla G_p(\cdot)\big)\bsPsi\in L^2(\Om)^k.
\end{equation}

The next step is to investigate the regularity properties of $\bsPsi$.
For all $q\geq 1$, we claim that 
\begin{equation}\label{E:2.23}
G_q\bsPsi\in\mathcal D(\bar \ID_0)\,,
\end{equation}
where $\mathcal D(\bar \ID_0)$ is the domain of the closure, $\bar \ID_0$, of $\ID_0$.
For, first note that the Euclidean topology on $\Om$ coincides with the topology 
induced by the distance function corresponding to \eqref{E:2.7} applied
to $\hatt{\bf M}$, see e.g. \cite[Corollary 1.4.1]{Jo}.
In particular, this means that saying that a set $K\subset\Om$ is compact means it
is compact in both topologies.

Let now $q\geq1$ be arbitrary, fixed. By Lemma~\ref{L:4}.i., $\text{supp}\,G_q$ is compact. By the Hopf-Rinow
theorem, there exists a compact $K$ in $\bsM$ such that $\text{supp}\,G_q\subset \text{Int}\, K$, and let $p_K$ be as in
Lemma~\ref{L:4}.ii..
By construction
\begin{equation}\label{SE:2.1}
0\leq G_q \leq \chi_{\text{Int}\,K} \leq G_{p_K}\leq 1,
\end{equation}
and
\begin{equation}\label{SE:2.2}
(1-\chi_{\text{Int}\,K})\bssig\big(\cdot,\nabla G_q(\cdot)\big)=0.
\end{equation}
In particular
\begin{equation}\label{SE:2.3}
G_{p_K}\big|_{_K}=1.
\end{equation}
By \eqref{E:2.22} and Lemma~\ref{L:4} we know that $iG_{p_K}\bsPsi+F_{p_K}\in L^2(\Om)^k$ .
Taking
$p=p_K$ in \eqref{E:2.21} yields
\begin{equation}\label{E:2.25}
\dlab\ID_0\bsPhi,G_{p_K}\bsPsi\drab=\dlab (\ID_0+i)\bsPhi,G_{p_K}\bsPsi\drab-i\dlab\bsPhi,G_{p_K}\bsPsi\drab
=\dlab\bsPhi,iG_{p_K}\bsPsi+{\bf F_{p_K}}\drab\,,
\end{equation}
which imply that $G_{p_K}\bsPsi \in \mathcal D(\ID_0^*)$ and then by Theorem \ref{T:M}
\begin{equation}\label{SE:2.4}
G_{p_K}\bsPsi \in \mathcal F.
\end{equation}

Taking $\Om'=\text{Int}\,K$ in \eqref{E:2.17} applied to $G_{p_K}\bsPsi$, it follows that there exists $\bsXi\in L^2(\Om)^k$ and a sequence 
$(\bsPhin)_n\subset C_0^1(\Om)^k$ such that
\begin{equation}\label{E:2.27}
\lim_{n\to\infty}\| \chi_{\text{Int}\,K}(\bsPhin-G_{p_K}\bsPsi)\|= 
\lim_{n\to\infty} \|\chi_{\text{Int}\,K}(\ID_0\bsPhin-\bsXi)\|=0\,.
\end{equation}

Denote $\wti\bsPhin=G_q\bsPhin$. We claim that as $n \rightarrow \infty$, $\wti\bsPhin\to G_q\bsPsi$ and the sequence $(\ID_0\wti\bsPhin)_n$ is convergent, that is \eqref{E:2.23} holds true.
 Indeed, from the definition of  $\wti\bsPhin$,  \eqref{SE:2.1},  \eqref{SE:2.3} and \eqref{E:2.27} we obtain that
 
\begin{equation}\label{E:2.29/30}
\begin{aligned}
\lim_{n\to\infty} \|\wti\bsPhin-G_q\bsPsi\| 
&=\lim_{n\to\infty} \|G_q(\bsPhin-\bsPsi)\| \leq\lim_{n\to\infty} \|\chi_{\text{Int}\,K}(\bsPhin-\bsPsi)\|\\
&=\lim_{n\to\infty} \|\chi_{\text{Int}\,K}(\bsPhin-G_{p_K}\bsPsi)\|=0
\end{aligned}
\end{equation}
which proves the first part of the claim above. Turning to $\ID_0\wti\bsPhi$, 
a direct computation using \eqref{E:2.9}  yields
\begin{equation}\label{E:2.28}
\begin{aligned}
\ID_0 G_q\bsPhin&=G_q\ID_0\bsPhin-i\bssig(\cdot,\nabla G_q(\cdot))\bsPhin\\
&=G_q(\ID_0\bsPhin-\bsXi)+G_q\bsXi-i\bssig(\cdot,\nabla G_q(\cdot))(\bsPhin-\bsPsi)+i\bssig(\cdot,\nabla G_q(\cdot))\bsPsi\,.
\end{aligned}
\end{equation}
Now from \eqref{SE:2.1} and \eqref{E:2.27}
\begin{equation}\label{SE:2.5}
\lim_{n\to\infty} \|G_q(\ID_0\bsPhin-\bsXi)\| \leq \lim_{n\to\infty} \|\chi_{\text{Int}\,K}(\ID_0\bsPhin-\bsXi)\|=0.
\end{equation}
Further, from \eqref{SE:2.1}, \eqref{SE:2.2}, \eqref{SE:2.3} 
\begin{equation}
\begin{aligned}
 \|\bssig(\cdot,\nabla G_q(\cdot))\,(\bsPhin-\bsPsi)\|&
 = \|((1-\chi_{\text{Int}\,K}) +\chi_{\text{Int}\,K})\bssig(\cdot,\nabla G_q(\cdot))\,(\bsPhin-\bsPsi)\| \\
&= \|\chi_{\text{Int}\,K}\bssig(\cdot,\nabla G_q(\cdot))\,(\bsPhin-\bsPsi)\| \\ 
& = \|\bssig(\cdot,\nabla G_q(\cdot))\,\chi_{\text{Int}\,K}(\bsPhin-G_{p_K}\bsPsi)\| \\
&\leq \|\bssig(\cdot,\nabla G_q(\cdot))\|\;\|\chi_{\text{Int}\,K}(\bsPhin-G_{p_K}\bsPsi)\|
\end{aligned}
 \end{equation}
which together with \eqref{E:2.52} and \eqref{E:2.27} leads to 
\begin{equation}\label{SE:2.6}
\lim_{n\to\infty} \|\bssig(\cdot,\nabla G_q(\cdot))\,(\bsPhin-\bsPsi)\|=0.
\end{equation}
Putting together  \eqref{E:2.28}, \eqref{E:2.5} and \eqref{E:2.6}  one obtains that
\begin{equation}\label{E:2.35}
\lim_{n\to\infty} \|\ID_0 G_q\bsPhin-G_q\bsXi-i\bssig(\cdot,\nabla G_q(\cdot))\bsPsi\|=0\,,
\end{equation}
which together with \eqref{E:2.29/30} shows that \eqref{E:2.23} holds true, thus proving our claim.

Given that $\bar\ID_0$ is symmetric, the claim \eqref{E:2.23} implies that for any $p\geq 1$ and $\bsPhi\in C_0^1(\Om)^k$
\begin{equation}\label{E:2.36}
\dlab (\bar\ID_0+i)\bsPhi, G_p\bsPsi\drab=\dlab \bsPhi,(\bar\ID_0-i) G_p\bsPsi\drab\,,
\end{equation}
while from \eqref{E:2.21} and \eqref{E:2.22} we find that
\begin{equation}\label{E:2.37}
\dlab (\bar\ID_0+i)\bsPhi, G_p\bsPsi\drab=\dlab (\ID_0+i)\bsPhi, G_p\bsPsi\drab
=-i\dlab\bsPhi,\bssig(\cdot,\nabla G_p(\cdot))\bsPsi\drab\,.
\end{equation}
The last two equations together give that
\begin{equation}\label{E:2.38}
(\bar\ID_0-i) G_p\bsPsi=-i\bssig(\cdot,\nabla G_p(\cdot))\bsPsi\,.
\end{equation}

Now let $\wti K$ be an arbitrary compact in $\Om$. For $p\geq p_{\wti K}$, 
the symmetry of $\bar\ID_0$ together with Lemma~\ref{L:4}ii. and \eqref{E:2.38} imply that
\begin{equation}\label{E:2.39}
\big\|\bssig(\cdot,\nabla G_p(\cdot))\bsPsi\big\|=\big\|(\bar\ID_0-i) G_p\bsPsi\big\|
\geq \| G_p\bsPsi\|\geq\|\chi_{\wti K}\bsPsi\|\,.
\end{equation}
From \eqref{E:2.52}, it easily follows that
\begin{equation}
\big\|\bssig(\cdot,\nabla G_p(\cdot))\bsPsi\big\|\leq \sup_{x\in\Om} \big|\nM G_p(x)\big|_{_{\mathcal M}}\,\|\bsPsi\|\,,
\end{equation}
which together with \eqref{E:2.39} gives
\begin{equation}\label{E:2.40}
\|\chi_{\wti K}\bsPsi\|\leq \sup_{x\in\Om} \big|\nM G_p(x)\big|_{_{\mathcal M}}\,\|\bsPsi\|\,.
\end{equation}
Taking $p\to\infty$ and using Lemma~\ref{L:4}iii., we obtain that $\|\chi_{\wti K}\bsPsi\|=0$,
which, given that $\wti K$ is arbitrary, implies $\bsPsi=0$, as required.
Repeating this argument for $\wti\bsPsi$, we obtain that $\wti\bsPsi=0$, thus completing the proof.

\medskip

\noindent
\textit{\underline{Step 2.} Part ii. for $\bsE\equiv \bf1$.} The proof is essentially the same as 
in Step 1, but instead of \eqref{E:2.23} we use the assumption that $\ID_0+\boldsymbol{V_1}+\chi_{\supp G_p} \boldsymbol{V_2}$ 
is essentially self-adjoint. Indeed, using the fact that $[G_p,\boldsymbol{V_1}+\boldsymbol{V_2}]=0$ for all $p\geq 1$ and
that $\chi_{\supp G_p} \boldsymbol{V_2}=\boldsymbol{V_2}$ on $\supp G_p$, we obtain that
\begin{equation}\label{E:2.42/43}
\dlab (\ID_0+\boldsymbol{V_1}+\chi_{\supp G_p} \boldsymbol{V_2}+i)\bsPhi,G_p\bsPsi\drab
=\dlab (\ID_0+\boldsymbol{V_1}+ \boldsymbol{V_2}+i)\bsPhi,G_p\bsPsi\drab=\dlab \bsPhi, F_p\drab\,,
\end{equation}
with $F_p$ given by \eqref{E:2.22}. From this and the essential self-adjointness of
$\ID_0+\boldsymbol{V_1}+\chi_{\supp G_p} \boldsymbol{V_2}$ it follows that
\begin{equation}\label{E:2.44}
G_p\bsPsi\in \mathcal D\big((\ID_0+\boldsymbol{V_1}+\chi_{\supp G_p} \boldsymbol{V_2})^*\big)
=\mathcal D\big(\overline{\ID_0+\boldsymbol{V_1}+\chi_{\supp G_p} \boldsymbol{V_2}}\big)\,,
\end{equation}
and hence
\begin{equation}
\big(\overline{\ID_0+\boldsymbol{V_1}+\chi_{\supp G_p} \boldsymbol{V_2}}-i\big)G_p\bsPsi=-i\bssig(\cdot,\nabla G_p(\cdot))\bsPsi\,,
\end{equation}
so that 
\begin{equation}
\|G_p\bsPsi\|\leq\|\bssig(\cdot,\nabla G_p(\cdot))\bsPsi\|\,.
\end{equation}
The rest of the argument remains unchanged.

\medskip
\noindent
\textit{\underline{Step 3.} Part i. for $\bsE\equiv \bf 1$ and general $\bsV$.} 
Since $\bsV$ is assumed to be in $L^\infty_\text{loc}(\Om)$, it follows that for every compact
set $K\subset\Om$ we have $\chi_K \bsV\in L^\infty(\Om)$, so the essential self-adjointness of 
$\ID_0$ (proven in Step 1) together with the Kato-Rellich theorem (see, e.g., \cite{RS})
imply the essential self-adjointness of $\ID_0+\chi_K \bsV$. An application of Theorem~\ref{T:1}ii. with $\bsE\equiv\bf 1$ (i.e. as proven in Step 2) and $\boldsymbol{V_1}=0$, $\boldsymbol{V_2}=\bsV$ yields the desired result.

\medskip
\noindent
\textit{\underline{Step 4.} Parts i. and ii. for general $\bsE$.} This case reduces to the 
"canonical form" corresponding to $\bsE\equiv \bf 1$ by a well-known transformation
(see, e.g., \cite[Chap. 3, \S 5]{J}).

Let $\mathbb{S}\,:\, \mathcal H_\bsE\,\to\, L^2(\Om)^k$ be given by
\begin{equation}\label{E:2.45}
\big(\mathbb{S}\bsPsi\big)(x)=\bsE(x)^{1/2}\bsPsi(x)\,.
\end{equation}
Then $\mathbb{S}$ is unitary and
\begin{equation}\label{E:2.46}
\mathbb{S }C_0^1(\Om)^k=\mathbb{S}^{-1} C_0^1(\Om)^k=C_0^1(\Om)^k\,.
\end{equation}
Now consider the symmetric operator $\wti{\ID}$ in $L^2(\Om)^k$ defined as
\begin{equation}\label{E:2.47}
\wti{\ID}=\mathbb{S}\ID \mathbb{S}^{-1},\qquad \mathcal D(\wti{\ID})=C_0^1(\Om)^k\,.
\end{equation}
By a direct computation, we find that
\begin{equation}\label{E:2.48}
\wti{\ID}=\wti{\ID}_0+\wti{\IV}_0+\wti{\IV}\,,
\end{equation}
where
\begin{equation}\label{E:2.49}
\begin{aligned}
\wti{\ID}_0&=\tfrac12\big(\vec{\wti A} \cdot\bsD+\bsD\cdot\vec{\wti A}\big)\\
\wti{\IV}_0&=\tfrac12\big(\bsE^{-1/2}\bsA\cdot(\bsD\bsE^{-1/2})
-(\bsD\bsE^{-1/2})\cdot \bsA \bsE^{-1/2}\big)\\
\wti{\IV}&=\bsE^{-1/2}\IV \bsE^{-1/2}\\
\wti{A}^j&=\bsE^{-1/2}\boldsymbol{A^j}\bsE^{-1/2}\,,\quad (\bsD\bsE^{-1/2})^j=-i\frac{\partial (\bsE^{-1/2})}{\partial x_j}\text{ for all }1\leq j\leq d\,.
\end{aligned}
\end{equation}

Assume that $\bsV\in L^\infty_\text{loc}(\Om)$. Then by \eqref{E:2.49}
we find that $\wti{\IV}_0,\wti{\IV}\in L^\infty_\text{loc}(\Om)$, and hence from
Theorem~\ref{T:1}i. for $\bsE\equiv \mathds{1}$ (Step 3), we obtain that
$\wti{\ID}$ is essentially self-adjoint in $L^2(\Om)^k$. 
By invariance of essential self-adjointness under unitary transformations, this
implies that $\ID$ is essentially self-adjoint in $\mathcal H_\bsE$, 
completing the proof of part i. of the Theorem.

Now suppose that the hypotheses of part ii. hold. Then for any compact set $K\subset\Om$
we know that $\wti{\ID}_0+\wti{\IV}_0+\wti{\IV}_1+\chi_K\wti{\IV}_2$
is essentially self-adjoint in $L^2(\Om)^k$. From Theorem \ref{T:1}ii. in the case $\bsE ={\bf 1}$
(Step 3) it follows that $\wti{\ID}$ is also essentially self-adjoint, 
and unitary invariance again implies that $\ID$ is essentially self-adjoint in $\mathcal H_\bsE$,
thus completing the proof of this step and the theorem.
\end{proof}

\section{Consequences}\label{S:3}

In this section we give some corollaries of Theorem~\ref{T:1}.
We begin with the fact that in the case $\bsM>0$, Theorem \ref{T:1} simplifies to:
\begin{corollary}\label{T:1e}
In the setting of Theorem \ref{T:1}, assume that $\bsM>0$ on $\Om$ and that 
$\mathcal M$ is a complete metric space. 
Then the conclusions of Theorem \ref{T:1} hold true.
\end{corollary}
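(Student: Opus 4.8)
The plan is to deduce the corollary from Theorem~\ref{T:1} by exhibiting a smooth majorant $\Mh\geq\bsM$ whose Riemannian structure $\cMh=(\Om,\Mh^{-1})$ is again complete; once this is in hand the hypotheses of Theorem~\ref{T:1} are satisfied and both conclusions follow at once. The entire content therefore lies in the construction of $\Mh$. First I would record that $\bsM$ is continuous: since $\bsE\in C^1(\Om)$ with $\bsE>0$, the field $\bsE^{-1/2}$ is $C^1$, and hence so is $\bsM$ by \eqref{E:2.6}. Combined with the standing hypothesis $\bsM>0$, this shows that the smallest eigenvalue $\lambda(x)$ of $\bsM(x)$ is a continuous, strictly positive function on $\Om$.

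The core step is a two-sided smooth approximation: I claim one can produce a positive-definite $\Mh\in C^\infty(\Om)$ with
\[
\bsM(x)\leq \Mh(x)\leq 2\,\bsM(x)\qquad\text{for all }x\in\Om.
\]
To do so I would fix a locally finite cover of $\Om$ and mollify $\bsM$ patch by patch at a scale fine enough that the resulting smooth symmetric field $\bsM_0$ satisfies
\[
-\tfrac14\lambda(x)\,{\bf 1}\leq \bsM_0(x)-\bsM(x)\leq \tfrac14\lambda(x)\,{\bf 1}\qquad\text{for all }x\in\Om,
\]
which is possible because $\bsM$ and $\lambda>0$ are continuous, and then set $\Mh=\bsM_0+c\,{\bf 1}$ for a smooth scalar $c$ chosen with $\tfrac14\lambda\leq c\leq\tfrac12\lambda$. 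Since $\lambda(x){\bf 1}\leq\bsM(x)$, the added term $\tfrac14\lambda{\bf 1}$ absorbs the mollification error and gives $\Mh\geq\bsM$, while the upper bounds give $\Mh\leq\tfrac54\bsM+\tfrac12\bsM\leq 2\bsM$. The delicate point here, which I expect to be the main obstacle, is that $\lambda$ may degenerate towards $\partial\Om$ or at infinity, so the mollification radius must be chosen locally rather than globally; this patching is routine but is where the real work sits.

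Finally I would transfer completeness from $\mathcal M$ to $\cMh$. From $\bsM\leq\Mh\leq 2\bsM$ and the order-reversal of matrix inversion one obtains $\tfrac12\bsM^{-1}\leq\Mh^{-1}\leq\bsM^{-1}$ for the metric tensors, and hence, by integrating along curves and taking infima, $\tfrac{1}{\sqrt2}\,d_{\mathcal M}\leq d_{\cMh}\leq d_{\mathcal M}$ for the associated distance functions. Consequently any $d_{\cMh}$-Cauchy sequence is $d_{\mathcal M}$-Cauchy and, by the assumed completeness of $\mathcal M$, converges to a point of $\Om$; since both continuous positive-definite metrics induce the Euclidean topology on $\Om$ (as already invoked in the proof of Theorem~\ref{T:1}), the same sequence converges in $\cMh$. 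Thus $\cMh$ is a complete Riemannian manifold with $\Mh\in C^\infty(\Om)$ and $\Mh\geq\bsM$, so Theorem~\ref{T:1} applies verbatim and yields both conclusions, completing the proof.
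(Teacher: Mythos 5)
Your proof is correct and follows essentially the same route as the paper: construct a smooth two\nobreakdash-sided approximant $\bsM\le\Mh\le C\,\bsM$ so that metric completeness transfers from $\mathcal M$ to $\cMh$ (via the equivalence of the two distance functions), and then invoke Theorem~\ref{T:1}. The only difference is that the paper obtains the approximant in one line from Agmon's density lemma (Lemma~\ref{L:A1}, applied to $2\bsM$ with $\delta=\tfrac12$), whereas you sketch the underlying mollification-and-patching argument by hand; citing that lemma would let you dispense with the step you rightly identify as containing the real work.
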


Corollary~\ref{T:1e} follows from Theorem~\ref{T:1} via the following density result
due to Agmon \cite[Lemma~A1]{Ag}:
 
 \begin{lemma}\label{L:A1}
Let $\bsG(x)$ be a continuous  $d\times d$ matrix-valued function on $\Om$, $\bsG(x)=\overline{\bsG(x)}>0$
for all $x\in\Om$. Then for every $\delta>0$ there exists a $d\times d$ matrix-valued function 
$\bsH\in C^\infty(\Om)$, $\bsH(x)=\overline{\bsH(x)}$ such that
\begin{equation}\label{E:2.50}
(1-\delta)\,\bsG(x)\leq \bsH(x)\leq (1+\delta)\, \bsG(x)\qquad\text{for all } x\in\Om\,.
\end{equation}
\end{lemma}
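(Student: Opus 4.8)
The plan is to build $\bsH$ by freezing $\bsG$ at suitably chosen points and gluing the resulting constant matrices together with a smooth partition of unity. The reason for phrasing everything multiplicatively, i.e. in terms of the two-sided estimate \eqref{E:2.50} rather than an additive bound $\norm{\bsH-\bsG}\le\delta$, is that \eqref{E:2.50} is stable under convex combinations, which is exactly the operation a partition of unity performs. For the local step, fix $\delta\in(0,1)$ and note that for a fixed $x_0$ the inequality $(1-\delta)\bsG(y)\le\bsG(x_0)\le(1+\delta)\bsG(y)$ is, after conjugating by $\bsG(y)^{-1/2}$, equivalent to the eigenvalues of $\bsG(y)^{-1/2}\bsG(x_0)\bsG(y)^{-1/2}$ lying in $[1-\delta,1+\delta]$. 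Since $\bsG$ is continuous and positive definite on $\Om$, the map $y\mapsto\bsG(y)^{-1/2}$ is continuous, so this matrix tends to $\mathds{1}$ as $y\to x_0$; hence for each $x_0\in\Om$ there is an open ball $B_{x_0}\subset\Om$ centered at $x_0$ on which
\begin{equation*}
(1-\delta)\bsG(y)\le\bsG(x_0)\le(1+\delta)\bsG(y)\qquad\text{for all }y\in B_{x_0}\,.
\end{equation*}

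Next I would globalize. The balls $\{B_{x_0}\}_{x_0\in\Om}$ cover $\Om$, and since $\Om\subset\IR^d$ is open, hence paracompact, I pass to a countable, locally finite refinement with centers $x_i$ and balls $B_i:=B_{x_i}$, together with a subordinate smooth partition of unity $\varphi_i\in C_0^\infty(B_i)$, $\varphi_i\ge0$, $\sum_i\varphi_i\equiv1$. I then set
\begin{equation*}
\bsH(x)=\sum_i\varphi_i(x)\,\bsG(x_i)\,.
\end{equation*}
Near any point only finitely many terms are nonzero, so $\bsH\in C^\infty(\Om)$, and since each $\bsG(x_i)$ is real symmetric and each $\varphi_i$ is real, $\bsH(x)=\overline{\bsH(x)}$.

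It remains to verify \eqref{E:2.50}. Fix $x\in\Om$; the only indices contributing to $\bsH(x)$ are those $i$ with $x\in\supp\varphi_i\subset B_i$, and for each such $i$ the local step gives $(1-\delta)\bsG(x)\le\bsG(x_i)\le(1+\delta)\bsG(x)$. Multiplying these inequalities by $\varphi_i(x)\ge0$ and summing over $i$, using $\sum_i\varphi_i(x)=1$, yields precisely
\begin{equation*}
(1-\delta)\bsG(x)\le\bsH(x)\le(1+\delta)\bsG(x)\,,
\end{equation*}
which is \eqref{E:2.50}.

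I do not expect a genuine obstacle here: the argument is soft and the only point requiring attention is that the estimate must hold \emph{uniformly} over all of $\Om$, even though $\bsG$ may degenerate or blow up near $\partial\Om$ or at infinity. This is handled automatically, since each ball $B_{x_i}$ is adapted to the local behavior of $\bsG$ and is allowed to shrink without any lower bound on its radius, so no uniform control on the geometry of the cover is needed; the multiplicative formulation converts this potential difficulty into the trivial identity $\sum_i\varphi_i\equiv1$.
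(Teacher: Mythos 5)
Your argument is correct: the two-sided multiplicative estimate localizes by continuity of $\bsG^{-1/2}$, and is preserved under the convex combination performed by the partition of unity, which is exactly the point. The paper does not prove this lemma but simply cites Agmon \cite{Ag}, whose proof is this same freezing-plus-partition-of-unity construction, so there is nothing to add beyond the cosmetic remark that one should choose the locally finite refinement so that each $\supp\varphi_i$ lies in one of the original balls $B_{x_i}$ (which is always possible) before freezing $\bsG$ at $x_i$.
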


\begin{proof}[Proof of Corollary~\ref{T:1e}]
Since in this case $2\bsM$ satisfies the hypotheses of Lemma~\ref{L:A1}, it follows that
for $\delta=\frac12$ there exists $\hatt \bsM\in C^\infty(\Om)$ such that
\begin{equation}\label{E:2.51}
\bsM(x)\leq \hatt \bsM(x)\leq 3\bsM(x)\qquad\text{for all } x\in\Om\,.
\end{equation}
From \eqref{E:2.51} we see that $\hatt \bsM^{-1}\geq \frac13 \bsM^{-1}$, and hence since $\mathcal M$
is metrically complete, $\hatt{\mathcal M}$ is metrically complete and then the Hopf-Rinow theorem implies that $\cMh$
is a complete Riemannian manifold. This together with the first inequality in \eqref{E:2.51}
allow us to apply Theorem~\ref{T:1}, which concludes the proof.
\end{proof}

The next straightforward statement shows how we often apply
the result of Theorem~\ref{T:1}i.

\begin{corollary}\label{C:0}
In the hypotheses of Theorem ~\ref{T:1} replace \eqref{E:2.7.1} by:
\begin{equation}\label{E:3.0}
\bsM(x)\leq a\,\Mh(x)\qquad\text{ for some} \quad a>0  \quad \text {and  for all }x\in\Om\,.
\end{equation}
If $\bsV\in L^\infty_{loc}(\Om)$, then $\ID$ is essentially self-adjoint.
\end{corollary}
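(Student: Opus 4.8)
The plan is to reduce the statement directly to Theorem~\ref{T:1}i.\ by absorbing the constant $a$ into the comparison matrix. First I would set $\Mh':=a\,\Mh$. Then $\Mh'\in C^\infty(\Om)$, $\Mh'$ is real and positive-definite (since $\Mh$ is), and the modified hypothesis \eqref{E:3.0} reads exactly $\bsM(x)\leq \Mh'(x)$ for all $x\in\Om$, i.e.\ it is precisely \eqref{E:2.7.1} with $\Mh$ replaced by $\Mh'$. Thus the only thing that needs checking before invoking Theorem~\ref{T:1} is that the Riemannian manifold $\cMh':=(\Om,(\Mh')^{-1})$ is still complete.

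The key observation is that multiplying the comparison matrix by a positive constant merely rescales the metric by a constant factor. Indeed, the line element associated with $\Mh'$ is
\begin{equation*}
(ds')^2=\sum_{j,l=1}^d (\Mh')^{-1}_{jl}\,dx_j\,dx_l
=\frac1a\sum_{j,l=1}^d \Mh^{-1}_{jl}\,dx_j\,dx_l=\frac1a\,(ds)^2\,,
\end{equation*}
so the length of every curve, and hence the induced geodesic distance, is scaled by the constant factor $a^{-1/2}$. Consequently a sequence is Cauchy for the distance associated with $\Mh'$ if and only if it is Cauchy for the distance associated with $\Mh$, and the two metric spaces are complete simultaneously. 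Since $\cMh$ is complete by hypothesis, so is $\cMh'$; the Hopf--Rinow theorem (already used above for Corollary~\ref{T:1e}) then upgrades this metric completeness to geodesic completeness, so $\cMh'$ is a complete Riemannian manifold.

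With $\Mh'$ in place of $\Mh$, all hypotheses of Theorem~\ref{T:1}i.\ are now in force (recall that $\bsV\in L^\infty_\loc(\Om)$ is assumed), and that theorem yields the essential self-adjointness of $\ID$, as claimed. I do not expect any genuine obstacle here: the entire content is the elementary invariance of completeness under a constant rescaling of the Riemannian metric, which is immediate from the scaling of the distance function above. The statement is therefore best presented as a short corollary rather than a theorem, since it simply records that the comparison in \eqref{E:2.7.1} may be weakened to hold up to a multiplicative constant.
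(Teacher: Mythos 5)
Your argument is correct and is exactly the paper's proof: the paper also replaces $\Mh$ by $a\Mh$ and observes that completeness is invariant under a constant rescaling of the metric, then invokes Theorem~\ref{T:1}i. You have simply spelled out in detail the one-line observation that the paper records.
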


\begin{proof}
The proof of this statement consists of merely observing that, given that $a>0$, 
the completeness of $\cMh$ is equivalent to the completeness
of $\cMh_a$, the Riemannian manifold associated to $\Mh_a(x)=a\Mh(x)$, $x\in\Om$.
\end{proof}

While this corollary is clearly equivalent to the statement of Theorem~\ref{T:1}i., 
we give it here since, when searching for $\Mh$ in applications, we often
come to inequalities of the type of \eqref{E:3.0} rather then the "pure" \eqref{E:2.7.1}.

\begin{corollary}\label{C:1}
Let $\Om=\IR^d$. In the setting of Corollary~\ref{C:0}, assume that there exist $R\geq 1$, 
$\kappa<\infty$, and $d\,:\,[R,\infty)\to(0,\infty)$ continuous such that
\begin{equation}\label{E:3.1}
\bsM(x)\leq \kappa d(|x|)^2\,{\bf 1}\qquad\text{for all } |x|\geq R
\end{equation}
and
\begin{equation}\label{E:3.2}
\int_R^\infty \frac{1}{d(t)}\,dt=\infty\,.
\end{equation}
Then $\ID$ is essentially self-adjoint.
\end{corollary}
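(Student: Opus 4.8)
The plan is to reduce the statement to Corollary~\ref{C:0} by exhibiting a single smooth, positive-definite comparison matrix $\Mh$ of the essentially scalar, radial form $\Mh(x)\asymp g(|x|)\,{\bf 1}$, where $g$ is a one–dimensional profile built from $d$. The guiding observation is that for a conformally Euclidean metric $g(|x|)^{-1}|dx|^2$ the completeness of the associated Riemannian structure is controlled by the single integral $\int^\infty g(t)^{-1/2}\,dt$, and that the multiplicative freedom $a$ in Corollary~\ref{C:0} is exactly what allows us to pass from a merely continuous profile $g$ to a genuinely smooth $\Mh$ without spoiling either the domination $\bsM\le a\Mh$ or the divergence of that integral.

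First I would construct a continuous $g:[0,\infty)\to(0,\infty)$ with $\bsM(x)\le g(|x|)\,{\bf 1}$ on $\Om$ and $\int_0^\infty g(t)^{-1/2}\,dt=\infty$. Since $\bsM$ is continuous, the spherical supremum $B(t):=\sup_{|x|=t}\|\bsM(x)\|_{\mathcal B(\IC^k)}$ is finite and continuous, and $B(t)\le\kappa\,d(t)^2$ for $t\ge R$ by \eqref{E:3.1}. I would therefore set $g(t)=\kappa\,d(t)^2$ for $t\ge R$ and $g(t)=\max\{B(t),\kappa\,d(R)^2\}$ for $0\le t\le R$; continuity at $t=R$ holds precisely because \eqref{E:3.1} is valid at $|x|=R$, so $B(R)\le\kappa\,d(R)^2$. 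Then $g(|x|)\ge B(|x|)\ge\|\bsM(x)\|_{\mathcal B(\IC^k)}$ gives the domination, while
$$\int_0^\infty g(t)^{-1/2}\,dt=\int_0^R g(t)^{-1/2}\,dt+\kappa^{-1/2}\int_R^\infty \frac{dt}{d(t)}=\infty$$
by \eqref{E:3.2}, the first integral being finite since $g\ge\kappa\,d(R)^2>0$ on $[0,R]$.

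Next I would smooth $g(|x|)\,{\bf 1}$ without leaving its multiplicative neighborhood: applying Lemma~\ref{L:A1} to the continuous positive-definite matrix function $\bsG(x)=g(|x|)\,{\bf 1}$ with $\delta=\tfrac12$ produces $\Mh\in C^\infty(\Om)$, $\Mh>0$, with $\tfrac12 g(|x|)\,{\bf 1}\le\Mh(x)\le\tfrac32 g(|x|)\,{\bf 1}$. The lower bound yields $\bsM(x)\le g(|x|)\,{\bf 1}\le 2\,\Mh(x)$, i.e.\ \eqref{E:3.0} with $a=2$. For completeness of $\cMh=(\Om,\Mh^{-1})$, the upper bound gives $\Mh^{-1}\ge\tfrac23\,g(|x|)^{-1}\,{\bf 1}$, so for any piecewise-$C^1$ path $\gamma$ from $0$ to $x$, writing $r(u)=|\gamma(u)|$ and using $|r'|\le|\gamma'|$,
$$\dist_{\cMh}(0,x)\ \ge\ \sqrt{\tfrac23}\int_0^1 g(r(u))^{-1/2}\,|r'(u)|\,du\ \ge\ \sqrt{\tfrac23}\int_0^{|x|} g(s)^{-1/2}\,ds\ \longrightarrow\ \infty$$
as $|x|\to\infty$. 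Hence every $\cMh$-bounded set is Euclidean-bounded; since $\Mh$ is continuous and positive-definite the Riemannian and Euclidean topologies coincide, so closed $\cMh$-bounded sets are compact, and Hopf--Rinow shows $\cMh$ is a complete Riemannian manifold (exactly as at the end of the proof of Corollary~\ref{T:1e}). With $\bsV\in L^\infty_{loc}(\Om)$ inherited from the setting of Corollary~\ref{C:0}, that corollary applies and yields the essential self-adjointness of $\ID$.

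The main obstacle, and the step deserving the most care, is the simultaneous control of the two competing requirements in the passage from $g$ to $\Mh$: domination forces $\Mh$ upward while completeness forces it downward, and only the constant $a$ of Corollary~\ref{C:0} reconciles them. Two places in the routine details require attention: the junction at $t=R$, which closes precisely because \eqref{E:3.1} is assumed up to and including $|x|=R$ (giving $B(R)\le\kappa\,d(R)^2$), and the behaviour on the inner ball $\{|x|\le R\}$, where \eqref{E:3.1} says nothing and one must instead invoke continuity of $\bsM$; both are handled by defining $g$ through the spherical supremum $B(t)$.
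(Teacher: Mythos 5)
Your proposal is correct and follows essentially the same route as the paper: build a radial continuous comparison profile from $d$ (handling the inner ball $\{|x|\le R\}$ by continuity of $\bsM$), smooth it with Agmon's Lemma~\ref{L:A1}, verify the domination \eqref{E:3.0} and the metric completeness via divergence of $\int g^{-1/2}$, and conclude by Corollary~\ref{C:0}. The only difference is cosmetic: you absorb $\kappa$ and the inner-ball supremum into the profile $g$ itself so that $a=2$, whereas the paper keeps $\wti\bsM=d(|\cdot|)^2{\bf 1}$ and pushes those quantities into the constant $a$.
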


\begin{proof}
For $x\in\IR^d$, define
\begin{equation}\label{E:3.3}
\wti \bsM(x)=
\begin{cases}
d(R)^2 {\bf 1} &\quad\text{for }|x|\leq R\\
d(|x|)^2{\bf 1} &\quad\text{for }|x|>R\,.
\end{cases}
\end{equation}
From \eqref{E:3.1}, it follows that for all $x\in\IR^d$
\begin{equation}\label{E:3.4}
\bsM(x)\leq \max \big\{\sup_{|x|\leq R} \|\bsM(x)\|_{\mathcal B(\IC^d)},\, d(R)^{-2},\,\kappa\big\}\,\wti \bsM(x)\,,
\end{equation}
while from Lemma~\ref{L:A1} we conclude that there exists $\hatt \bsM\in C^\infty(\IR^d)$ such that
\begin{equation}\label{E:3.5}
\frac12\,\wti \bsM(x)\leq\hatt \bsM(x)\leq\frac32\,\wti \bsM(x)\,.
\end{equation}

From \eqref{E:3.4} and the first inequality in \eqref{E:3.5} we find that
\begin{equation*}
\bsM(x)\leq 2  \max \big\{\sup_{|x|\leq R} \|\bsM(x)\|_{\mathcal B(\IC^d)},\, d(R)^{-2},\,\kappa\big\}\,\hatt \bsM(x)
\quad\text{for all } x\in\IR^d\,,
\end{equation*}
thus satisfying \eqref{E:3.0} with 
\begin{equation*}
a=2  \max \big\{\sup_{|x|\leq R} \|\bsM(x)\|_{\mathcal B(\IC^d)},\, d(R)^{-2},\,\kappa\big\}>0\,.
\end{equation*}
Furthermore, the second inequality in \eqref{E:3.5} and \eqref{E:3.3} imply that
\begin{equation*}
\Mh(x)\leq\frac32\, d(|x|)^2\qquad\text{for all } |x|\geq R\,,
\end{equation*}
which combined with hypothesis \eqref{E:3.2} shows that $\cMh$ is complete. 
The conclusion then follows by Corollary~\ref{C:0}.
\end{proof}

\begin{corollary}\label{C:2}
Let $d=1$ and $\Om=\IR$. In the setting of Corollary~\ref{C:0} assume that
there exists $R<\infty$ such that $\text{det}\,{\bf A^1}(x)\neq 0$ for all $|x|\geq R$. 
If
\begin{equation}\label{E:3.6}
\int_{\pm(R+1)}^{\pm\infty} \frac{dx}{\big\|\bsE(x)^{-1/2}{\bf A^1}(x)\bsE(x)^{-1/2}\big\|_{\mathcal B(\IC^k)}}=\infty
\end{equation}
then $\ID$ is essentially self-adjoint.
\end{corollary}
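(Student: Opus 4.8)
The plan is to deduce the statement from Corollary~\ref{C:0} by constructing, out of $\boldsymbol{A^1}$ and $\bsE$, a scalar function $\Mh$ whose associated one-dimensional length is controlled by the integral in \eqref{E:3.6}. Since $d=1$, both $\bsM$ and any admissible $\Mh$ are scalar ($1\times1$) functions, and the Riemannian structure $\cMh=(\IR,\Mh^{-1})$ has line element $ds=\Mh(x)^{-1/2}\,\abs{dx}$. The first observation is that, writing $\bsB(x)=\bsE(x)^{-1/2}\boldsymbol{A^1}(x)\bsE(x)^{-1/2}$, formula \eqref{E:2.6} reduces in dimension one to $\bsM(x)=\Tr\big(\bsB(x)^2\big)$. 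As $\bsB(x)$ is Hermitian, with $k$ real eigenvalues (counted with multiplicity) each of modulus at most $\norm{\bsB(x)}_{\mathcal B(\IC^k)}$, this yields the pointwise bound
\begin{equation*}
\bsM(x)=\Tr\big(\bsB(x)^2\big)\leq k\,\norm{\bsB(x)}_{\mathcal B(\IC^k)}^2\,,\qquad x\in\IR\,,
\end{equation*}
so it suffices to dominate $g(x)^2:=\norm{\bsB(x)}_{\mathcal B(\IC^k)}^2$ by a constant multiple of a smooth, complete $\Mh$.

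Next I would record the regularity and nondegeneracy of $g$. Because $\bsE\in C^1(\IR)$ with $\bsE>0$, the matrix $\bsB$ is continuous, and hence so is $g=\norm{\bsB}_{\mathcal B(\IC^k)}$; moreover $g\geq0$ throughout, while the hypothesis $\det\boldsymbol{A^1}(x)\neq0$ for $\abs{x}\geq R$ forces $\bsB(x)$ to be invertible, hence $g(x)>0$, on $\{\abs{x}\geq R\}$. The only place where $g$ may vanish is the compact core $\{\abs{x}<R\}$, which I repair by setting $h:=g+\psi$ with $\psi\in C_0^\infty(\IR)$, $\psi\geq0$, chosen strictly positive on $\{\abs{x}\leq R\}$ and supported in $\{\abs{x}<R+1\}$. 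Then $h$ is continuous and strictly positive on all of $\IR$, satisfies $h\geq g$, and agrees with $g$ on $\{\abs{x}\geq R+1\}$. Applying Lemma~\ref{L:A1} to the continuous positive scalar function $h^2$ with $\delta=\tfrac12$ produces a smooth $\Mh\in C^\infty(\IR)$, $\Mh>0$, with $\tfrac12 h^2\leq\Mh\leq\tfrac32 h^2$.

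It remains to verify the two hypotheses of Corollary~\ref{C:0}. For the domination \eqref{E:3.0}, combining the trace bound above with $g\leq h$ and $h^2\leq 2\Mh$ gives $\bsM\leq k\,g^2\leq k\,h^2\leq 2k\,\Mh$ on $\IR$, i.e. \eqref{E:3.0} with $a=2k$. For completeness of $\cMh$, by the Hopf--Rinow theorem it is enough that the distance from the origin to each end be infinite, i.e. that both $\int_0^{+\infty}\Mh(x)^{-1/2}\,dx$ and $\int_{-\infty}^0\Mh(x)^{-1/2}\,dx$ diverge. On $\{\abs{x}\geq R+1\}$ the upper bound $\Mh\leq\tfrac32 h^2=\tfrac32 g^2$ gives $\Mh^{-1/2}\geq\sqrt{2/3}\,g^{-1}$, whence
\begin{equation*}
\int_{\pm(R+1)}^{\pm\infty}\Mh(x)^{-1/2}\,dx\;\geq\;\sqrt{\tfrac23}\int_{\pm(R+1)}^{\pm\infty}\frac{dx}{\norm{\bsE(x)^{-1/2}\boldsymbol{A^1}(x)\bsE(x)^{-1/2}}_{\mathcal B(\IC^k)}}\;=\;\infty
\end{equation*}
by \eqref{E:3.6}; since $\Mh$ is positive and continuous on the remaining compact piece, the full integrals to $\pm\infty$ diverge and $\cMh$ is complete. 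Corollary~\ref{C:0} then delivers the essential self-adjointness of $\ID$.

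The only genuine issue to watch is the construction of $\Mh$: one must secure smoothness and strict positivity \emph{everywhere}, not merely in the tails where $\det\boldsymbol{A^1}\neq0$ guarantees $g>0$. This is precisely why the nondegeneracy is assumed only for $\abs{x}\geq R$ while the core is patched by the bump $\psi$ before invoking Agmon's smoothing Lemma~\ref{L:A1}. Everything else is the routine one-dimensional identification of metric completeness with divergence of the length integral.
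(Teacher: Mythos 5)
Your proposal is correct and follows essentially the same route as the paper: bound $M=\Tr(\bsB^2)\leq k\|\bsB\|^2$, repair the possible degeneracy of $M$ on the compact core, smooth the resulting positive continuous majorant with Agmon's Lemma~\ref{L:A1} at $\delta=\tfrac12$, and read off completeness from the divergence of the length integrals \eqref{E:3.6} before invoking Corollary~\ref{C:0}. The only (immaterial) difference is the patching device on $\{|x|\leq R+1\}$ --- you add a bump to $\|\bsB\|$, while the paper lifts $M$ itself to the constant $\tfrac{\eta}{2}$ on the set where it falls below that threshold.
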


\begin{proof}
Since $d=1$, $\bsM=M$ is scalar, non-negative valued. Note that 
since $\text{det}\,{\bf A^1}(x)\neq 0$ for all $|x|\geq R$, both $M(R+1)$ and
$M(-R-1)$ are positive.
Let then
\begin{equation*}
\eta=\min\big\{M(-R-1), M(R+1)\big\}>0\,,
\end{equation*}
\begin{equation*}
A_{\eta}=\bigg\{x\in[-R-1,R+1]\,\big|\, M(x)\leq\frac{\eta}{2}\bigg\}\,,
\end{equation*}
and
\begin{equation}\label{E:3.7}
\wti M(x)=
\begin{cases}
\frac{\eta}{2} &\quad\text{for }x\in A_{\eta},\\
  & \\
M(x) &\quad\text{otherwise}\,.  
\end{cases}
\end{equation}

By construction, $\wti M$ is positive on $\IR$ and is continuous, 
since $A_\eta$ is a closed set inside $(-R-1,R+1)$ and $M$ itself
is continuous. Furthermore,
\begin{equation}\label{E:3.8}
M(x)\leq \wti M(x)\quad\text{for all }x\in\IR\,.
\end{equation} 
From \eqref{E:3.6}, \eqref{E:3.7} and the fact (see \eqref{E:2.6}) that on $\IR$
\begin{equation*}
M
=\text{Tr}\big(\bsE^{-1/2}{\bf A^1} \bsE^{-1/2}\big)^2\leq k
\big\|\bsE^{-1/2}{\bf A^1}\bsE^{-1/2}\big\|^2_{\mathcal B(\IC^k)}\,,
\end{equation*}
we conclude that $\wti{\mathcal M}$ is metrically complete. 
As in the previous proofs, from Lemma~\ref{L:A1} used for $\wti M$ 
and $\delta=\frac12$, we know that there exists $\Mh\in C^\infty(\IR)$
such that
\begin{equation*}
\frac12 \wti M\leq \Mh\leq\frac32\wti M\quad\text{on }\IR\,.
\end{equation*}
We then find that $M\leq 2\Mh$ on $\IR$, and that $\cMh$
is metrically complete, so the conclusion follows by Corollary~\ref{C:0}.
\end{proof}

For the particular case when ${\bf A^1}(x)$ is invertible for every $x$ (but with weaker
smoothness conditions), Corollary~\ref{C:2} was proven by Lesch and Malamud 
\cite[Theorem 3.8]{LM}. 

We end this section with a discussion of relation between Chernoff, Fattorini conditions (see \eqref{Ch1}- \eqref{AF2}) and the completeness condition in Theorem \ref{T:1}. For, let:
\begin{equation}\label{CFN}
c(x)=  \sup_{\xi\in\IR^d,|\xi|=1} 
\bigg\|\sum_{j=1}^d \xi_j\bsAj(x)\bigg\|_{\mathcal B(\IC^k)}\,, \quad r(x)= \text{max}_j\{\|\bsA^j(x)\|\}.
\end{equation}
Then one can easily see that
\begin{equation}\label{CF}
r(x)\leq c(x) \leq d^{1/2}r(x)
\end{equation}
which implies the equivalence between Chernoff and Fatorini conditions. Further
recall that for $\xi\in\IR^d$, $\bssig(x,\xi)$ is Hermitian, and so \eqref{E:2.12} implies that
\begin{equation*}
\langle \xi, \bsM(x)\xi\rangle=\text{Tr}\,\bssig(x,\xi)^2\,.
\end{equation*}
Hence
\begin{equation*}
\big\|\bssig(x,\xi)\big\|_{\mathcal B(\IC^k)}^2\geq\frac{1}{k} \text{Tr}\,\bssig(x,\xi)^2
=\frac{1}{k}\langle \xi, \bsM(x)\xi\rangle\,,
\end{equation*}
and then
\begin{equation}\label{E:3.11}
 k c(x)^2{\bf 1}\geq \bsM(x)\quad\text{for all } x\in\IR^d\,
\end{equation}
which together with Corollary \ref{C:1} shows that the completeness condition in Theorem \ref{T:1}
is a weaker, ``anisotropic" form of Chernoff and Fattorini conditions.  
One can easily construct examples where both 
Chernoff and Fattorini conditions are inconclusive and still Theorem \ref{T:1} ensures essential self-adjointness.

\section{Applications to wave propagation problems of classical physics}\label{S:4}

We now focus on the application of the main result to the problem of energy confinement for some wave 
propagation phenomena in classical physics, namely in electromagnetism and continuum mechanics.
In the linear approximation, these phenomena are described by 
hyperbolic equations which can be cast in the form \cite{Wi1}:
\begin{equation}\label{E:3.12}
i\frac{\partial}{\partial t}\bsPsi=\ID_0\bsPsi\,.
\end{equation}
The important fact for the physical interpretation is that $\|\bsPsi\|_\bsE^2$ is (up to a multiplicative constant) nothing but
the total energy of the system in the state $\bsPsi$. 

\begin{example}[Telegraph equation]
We begin with the most elementary example: the propagation of electric
signals along a loosless line (telegraph equation) \cite{Wi1}. The
equations are
\begin{equation}\label{E:star1}
\begin{aligned}
L(x)\,\frac{\partial i(x,t)}{\partial t}+\frac{\partial v(x,t)}{\partial x}&=0\\
C(x)\,\frac{\partial v(x,t)}{\partial t}+\frac{\partial i(x,t)}{\partial x}&=0
\end{aligned}
\end{equation} 
for all $x\in(a,b)\subset\IR$ and $t\in\IR$. Here $i$ and $v$ denote
current and voltage in the line.  $L(x)>0$, $C(x)>0$ are
densities of inductance and capacitance, respectively and are supposed to to be in $C^1(\Om)$. The energy density is
\begin{equation}\label{E:star2}
e(x,t)=\frac12\big(L(x)i(x,t)^2+C(x)v(x,t)^2\big)\,,
\end{equation}
and the local velocity of the signal is (in appropriate units)
\begin{equation}\label{E:star3}
c(x)=\big(L(x)C(x)\big)^{-1/2}\,.
\end{equation}

The equations \eqref{E:star1} can be written in the form \eqref{E:3.12} for
\begin{equation}\label{E:star4}
\bsPsi(x,t)=\big[i(x,t)\,,\,v(x,t)\big]^T
\end{equation}
with
\begin{equation}\label{E:star5}
\bsE(x)=\left[\begin{matrix}
L(x) & 0\\
0 & C(x)
\end{matrix}\right]\,,\quad
\boldsymbol{A^1}=\left[\begin{matrix}
0 & 1\\
1 & 0
\end{matrix}\right]\,.
\end{equation}
Notice that
\begin{equation}\label{E:star6}
\dlab\bsPsi(t),\bsPsi(t)\drab_{\bsE}=2\int_a^b e(y,t)\,dy
\end{equation}
i.e. (up to a multiplicative constant) $\|\bsPsi(t)\|^2_\bsE$ is the total energy.
In this case, the velocity matrix is just a positive number, and plugging
\eqref{E:star5} into \eqref{E:2.6} one obtains (see \eqref{E:star3})
\begin{equation}\label{E:star7}
M(x)=\frac{2}{L(x)C(x)}=2c(x)^2\,.
\end{equation}

The metric completeness of $\big((a,b),M^{-1}\big)$, which via Corollary~\ref{T:1e}
ensures the self-adjointness of $\ID_0$ and hence the conservation of total energy,
is equivalent with
\begin{equation}\label{E:star8}
\int_a\frac{1}{c(y)}\,dy=\infty\quad\text{and}\quad\int^b\frac{1}{c(y)}\,dy=\infty\,,
\end{equation}
saying that if $\Psi(\cdot,0)$ is compactly supported, then for all $t\in\IR$ 
$\Psi(\cdot,t)$ is also compactly supported.
\end{example}

\begin{example}[Maxwell's equations \cite{Wi1}] \label{E:1}
The next example we consider here are the Maxwell equations without sources in a linear,
nondispersive, inhomogeneous anisotropic medium which fills a domain $\Om\subset\IR^2$.
The properties of the medium are described by its dielectric permitivity and magnetic permeability tensors
\begin{equation*}
\bsvareps(x)=\big(\varepsilon_{jk}(x)\big)_{1\leq j,k\leq 3}\text{ and }
{\bsmu}(x)=\big(\mu_{jk}(x)\big)_{1\leq j,k\leq 3}\,,\quad x\in\Om
\end{equation*}
respectively, and the state of the system is given by electric and magnetic fields 
$\bscE(x,t)=\big(\mathcal E_j(x,t)\big)_{1\leq j\leq3}$ and 
$\bscH(x,t)=\big(\mathcal H_j(x,t)\big)_{1\leq j\leq3}$, respectively.
We assume that the (real) matrices $\bsvareps,\bsmu\in C^1(\Om)$ are strictly positive definite
at every $x\in\Om$.

The equations of motion for $\bscE$ and $\bscH$ are
\begin{equation}\label{E:3.13}
\begin{aligned}
\big(\nabla\times\bscH\big)(x,t) - \bsvareps(x)\,
\frac{\partial\bscE(x,t)}{\partial t}&=0\\
\big(\nabla\times\bscE\big)(x,t)+ \bsmu(x)\,
\frac{\partial\bscH(x,t)}{\partial t}&=0
\end{aligned}
\end{equation}
for all $x\in\Om$ and $t>0$. 
These can be rewritten in the form \eqref{E:3.12} for the vector
\begin{equation}
\label{E:3.13*}
\bsPsi(x,t)=\,\big(\cE_1(x,t),\cE_2(x,t),\cE_3(x,t),\cH_1(x,t),\cH_2(x,t),\cH_3(x,t)\big)^T\,.
\end{equation}
Here $\bsE(x)$ is the $6\times6$ strictly positive matrix
\begin{equation}\label{E:3.14}
\bsE(x)=
\left[\begin{matrix}
\bsvareps & 0\\
0 & \bsmu
\end{matrix}\right]\,,
\end{equation}
and $\bsAj$ are the real, constant $6\times6$  matrices
\begin{equation}\label{E:3.15.1}
\bsAj=-\left[\begin{matrix}
0 & \bsaj\\
(\bsaj)^T & 0
\end{matrix}\right]\,,
\end{equation}
with
\begin{equation}\label{E:3.15.2}
\boldsymbol{a^1}=
\left[\begin{matrix}
0&0&0\\
0&0&-1\\
0&1&0
\end{matrix}\right]\,,\quad
\boldsymbol{a^2}=
\left[\begin{matrix}
0&0&1\\
0&0&0\\
-1&0&0
\end{matrix}\right]\,,\quad
\boldsymbol{a^3}=
\left[\begin{matrix}
0&-1&0\\
1&0&0\\
0&0&0
\end{matrix}\right]\,.
\end{equation}

The energy density is given
(up to a multiplicative constant) by
\begin{equation}\label{E:3.16}
e(x,t)=\bscE(x,t)^T\bsvareps(x)\bscE(x,t)+\bscH(x,t)^T\bsmu(x)\bscH(x,t)\,,
\end{equation}
and so from \eqref{E:3.13}--\eqref{E:3.16} we obtain that
\begin{equation}\label{E:3.17}
\dlab\bsPsi(t),\bsPsi(t)\drab_{_\bsE}=\int_\Om e(x,t)\,dx\,.
\end{equation}
A straightforward  computation which uses  the invariance of the trace under cyclic permutations and transposition shows
 that the (velocity) matrix
$\bsM(x)$ is given by
\begin{equation}\label{E:3.18}
M_{jl}=
2\,\text{Tr}\big(\bsvareps^{-1/2}\bsaj\bsmu^{-1/2}(\bsvareps^{-1/2}\bsal\bsmu^{-1/2})^T\big)
\end{equation}
The physical meaning of $\bsM(x)$ becomes clear in the isotropic case, i.e. the case where there
exist scalar-valued functions $\varepsilon$ and $\mu$ such that:
\begin{equation}\label{E:3.19}
\bsvareps(x)=\varepsilon(x){\bf{1}}\quad\text{and}\quad \bsmu(x)=\mu(x){\bf{1}}
\end{equation}
for scalar-valued functions $\varepsilon$ and $\mu$; when \eqref{E:3.18} reduces to
\begin{equation}\label{E:3.20}
\bsM(x)=4c(x)^2{\bf{1}}
\end{equation}
where
\begin{equation}\label{E:3.21}
c(x)=\big(\varepsilon(x)\mu(x)\big)^{-1/2}
\end{equation}
is the local velocity of light, hence the condition for essential self-adjointness
of $\ID_0$ translates to a condition on $c(x)$ only. An explicit form of such
a condition is given by Corollary~\ref{C:1}. Mimicking that proof, one can also
show that if $\Om$ is bounded, a sufficient condition for the essential self-adjointness
of $\ID_0$ is to have $c(x)\to0$ sufficiently fast as $x\to\partial\Om$:

\begin{corollary}\label{C:4}
In the setting of Theorem~\ref{T:1}i. assume that $\Om$ is bounded, 
and for $x\in\Om$ let $\delta(x)=\text{dist}(x,\partial\Om)$. Further assume that
there exist $\kappa<\infty$ and $b\,:\,(0,1)\to(0,\infty)$ continuous such that
\begin{equation}\label{E:3.22}
c(x)\leq \kappa b\big(\delta(x)\big)\quad\text{with }\lim_{\eta\to0+} \int_\eta \frac{dy}{b(y)}=\infty\,.
\end{equation}
Then $\ID_0$ is essentially self-adjoint.
\end{corollary}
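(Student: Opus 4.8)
The plan is to mimic the proof of Corollary~\ref{C:1}, replacing the radial variable $|x|$ by the boundary--distance $\delta(x)=\dist(x,\partial\Om)$ and the approach to infinity by the approach to $\partial\Om$. Recall that in the isotropic setting at hand $\bsM(x)=4c(x)^2{\bf 1}$ by \eqref{E:3.20}, so hypothesis \eqref{E:3.22} reads $\bsM(x)\le 4\kappa^2 b(\delta(x))^2{\bf 1}$ wherever $\delta(x)$ is small. As in all the corollaries of this section, the whole task is to exhibit a smooth, positive $\Mh$ that dominates $\bsM$ up to a constant and whose Riemannian manifold $\cMh$ is complete, and then to invoke Corollary~\ref{C:0}.

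First I would construct a continuous scalar comparison matrix. Extend $b$ to a continuous positive function $\beta$ on $(0,\infty)$ by $\beta(s)=b(s)$ for $0<s<\tfrac12$ and $\beta(s)=b(\tfrac12)$ for $s\ge\tfrac12$, and set $\wti\bsM(x)=\beta(\delta(x))^2{\bf 1}$, which is continuous and positive on $\Om$. On the boundary layer $\{\delta(x)<\tfrac12\}$ the bound above gives $\bsM(x)\le 4\kappa^2\,\wti\bsM(x)$; on the complementary set $\{\delta(x)\ge\tfrac12\}$, which is closed and Euclidean--bounded (hence a compact subset of $\Om$ since $\Om$ is bounded), the continuous function $\bsM$ is bounded, and comparing with the constant value $\wti\bsM=b(\tfrac12)^2{\bf 1}$ there yields $\bsM(x)\le a'\,\wti\bsM(x)$ on all of $\Om$ for a suitable $a'<\infty$. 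As before, Agmon's Lemma~\ref{L:A1} (with $\delta=\tfrac12$) then furnishes a smooth $\Mh\in C^\infty(\Om)$ with $\tfrac12\wti\bsM\le\Mh\le\tfrac32\wti\bsM$, whence $\bsM\le 2a'\,\Mh$ and hypothesis \eqref{E:3.0} of Corollary~\ref{C:0} is satisfied.

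The crux is completeness of $\cMh$. Since $\delta$ is $1$--Lipschitz, $|\nabla\delta|\le1$ almost everywhere, and I would exploit this through a Lipschitz potential rather than an explicit geodesic computation. Define on the boundary layer
\begin{equation*}
\Phi(x)=\int_{\delta(x)}^{1/2}\frac{dy}{b(y)}\,,
\end{equation*}
extended by $0$ where $\delta(x)\ge\tfrac12$; by the divergence condition in \eqref{E:3.22}, $\Phi(x)\to\infty$ as $\delta(x)\to0$. On $\{\delta(x)<\tfrac12\}$ one has $\nabla\Phi=-b(\delta)^{-1}\nabla\delta$, so using $\Mh\le\tfrac32\beta(\delta)^2{\bf 1}$ and $\beta(\delta)=b(\delta)$ there,
\begin{equation*}
\big|\nMh\Phi\big|_{\cMh}^2=\langle\nabla\Phi,\Mh\,\nabla\Phi\rangle\le\tfrac32\,\beta(\delta)^2\,|\nabla\Phi|^2=\tfrac32\,|\nabla\delta|^2\le\tfrac32
\end{equation*}
almost everywhere (and the left side vanishes where $\delta\ge\tfrac12$). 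Thus $\Phi$ is Lipschitz for the Riemannian distance of $\cMh$, giving $\dist_{\cMh}(x_0,x)\ge\sqrt{2/3}\,\big(\Phi(x)-\Phi(x_0)\big)\to\infty$ as $x\to\partial\Om$, for a fixed interior $x_0$. Consequently any $\cMh$--bounded set has $\Phi$ bounded, hence $\delta$ bounded away from $0$, hence compact closure in $\Om$; this is metric completeness, and the Hopf--Rinow theorem upgrades it to geodesic completeness of $\cMh$. Corollary~\ref{C:0} then yields the essential self-adjointness of $\ID_0$.

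The main obstacle is exactly this completeness step: the distance--to--boundary function is merely Lipschitz and not smooth, so $\wti\bsM$ cannot be fed directly into Theorem~\ref{T:1}, and the ``boundary at infinite distance'' property must be extracted from the almost--everywhere bound $|\nabla\delta|\le1$ via the potential $\Phi$ and the monotone comparison $\Mh\le\tfrac32\wti\bsM$, rather than from any regularity of $\delta$ itself.
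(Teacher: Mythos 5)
Your proof is correct and follows exactly the route the paper intends: the paper offers no written proof of Corollary~\ref{C:4} beyond the remark that one should mimic the proof of Corollary~\ref{C:1}, and your argument is precisely that adaptation, replacing $|x|\to\infty$ by $\delta(x)\to 0$, building the continuous comparison matrix $\wti\bsM$ from $b\circ\delta$, smoothing via Lemma~\ref{L:A1}, and invoking Corollary~\ref{C:0}. Your Lipschitz-potential verification of the completeness of $\cMh$ (using $|\nabla\delta|\le 1$ a.e.\ and the divergence of $\int_0 dy/b(y)$) is a detail the paper suppresses even in Corollary~\ref{C:1}, and it is a correct and welcome addition.
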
 
\end{example}

\begin{example}[Acoustic waves \cite{P,Wi1}] \label{E:2}
The next example is the propagation of acoustic waves in an anisotropic
inhomogeneous medium at rest. Acoustics is ordinarily concerned with
small disturbances, hence a linear (elastic) approximation is applicable.
The equation of motion for the displacement $\bsxi(x,t)$ of the ``particle"
(i.e., infinitesimal volume of medium) nominally at $x$ reads (Newton's law)
\begin{equation}\label{E:3.23}
\rho(x)\,\frac{\partial^2\xi_j(x,t)}{\partial t^2}=\sum_{l=1}^3 \frac{\partial\sigma_{jl}(x,t)}{\partial x_l}
\quad 1\leq j\leq3\,,
\end{equation}
where $\rho(x)>0$ is the density of the medium at rest and $\sigma_{jl}(x)=\sigma_{lj}(x)$ is the stress tensor.
In the linear approximation, the stress tensor is linearly dependent on the strain tensor
\begin{equation}\label{E:3.24}
\epsilon_{jl}(x,t)=\frac12\left(\frac{\partial\xi_j(x,t)}{\partial x_l}+\frac{\partial \xi_l(x,t)}{\partial x_j}\right)\,.
\end{equation}
The linear dependence (generalized Hooke's law) is
\begin{equation}\label{E:3.25}
\sigma_{jl}(x,t)=\sum_{m,n=1}^3 c_{jl,mn}(x)\epsilon_{mn}(x,t)\,,
\end{equation}
where the stress-strain tensor $c_{jl,mn}$ encodes the elastic properties of the medium. From
general physical requirements, $c_{jl,mn}$ has the symmetries
\begin{equation}\label{E:3.26}
c_{jl,mn}=c_{jl,nm}=c_{lj,mn}=c_{mn,jl}\,,
\end{equation}
hence out of its 81 components, only 21 are independent. The total mechanical density energy 
(taken to be zero at rest) as given by the sum of kinetic and potential energy is
\begin{equation}\label{E:3.27}
e(x,t)=\frac12\rho(x)\sum_{j=1}^3 \left(\frac{\partial\xi_j(x,t)}{\partial t}\right)^2
+\frac12\sum_{j,l=1}^3 \sigma_{jl}(x,t) \epsilon_{jl}(x,t)\,.
\end{equation}

In order to rewrite the equation of motion in the form \eqref{E:3.12}, it is convenient
to introduce the stiffness $6\times6$ (symmetric) matrix
\begin{equation}\label{E:3.28}
\bsC=
\left[\begin{matrix}
c_{11,11} & c_{11,22} & c_{11,33} & c_{11,12} & c_{11,23} & c_{11,31} \\
c_{22,11} & c_{22,22} & c_{22,33} & c_{22,12} & c_{22,23} & c_{22,31} \\
\vdots &\vdots & \vdots & \vdots & \vdots & \vdots \\
c_{31,11} & c_{31,22} & c_{31,33} & c_{31,12} & c_{31,23} & c_{31,31}
\end{matrix}\right]
\end{equation}
and the vectors
\begin{equation}\label{E:3.29}
\boldsymbol\Sigma=\big[\sigma_{11},\,\sigma_{22},\,\sigma_{33},\,\sigma_{12},\,\sigma_{23},\,\sigma_{31}\big]^T
\quad\text{and}\quad
\bsP=\left[\rho\frac{\partial\xi_1}{\partial t},\,\rho\frac{\partial\xi_2}{\partial t},
\,\rho\frac{\partial\xi_3}{\partial t}\right]^T\,.
\end{equation}
Now note that, by using \eqref{E:3.25}, the potential energy density takes the form
\begin{equation}\label{E:3.30}
\frac12\sum_{j,l=1}^3 \sigma_{jl}\epsilon_{jl}=\big(\bscE,\bsC\bscE\big)\,,
\end{equation}
where 
\begin{equation}\label{E:3.31}
\bscE=\big[\epsilon_{11},\,\epsilon_{22},\,\epsilon_{33},\,\epsilon_{12},\,\epsilon_{23},\,\epsilon_{31}\big]^T\,.
\end{equation}
The physical requirement that the potential energy density is positive and
has a unique global minimum at $\bscE=0$ implies that
\begin{equation}\label{E:3.32}
\bsC(x)>0\quad\text{for all }x\in\Om\,.
\end{equation}
In terms of $P_j=\rho\frac{\partial\xi_j}{\partial t}$ and $\sigma_{jl}$,
\eqref{E:3.23} takes the form
\begin{equation}\label{E:3.33}
\frac{\partial P_j}{\partial t}=\sum_{l=1}^3 \frac{\partial\sigma_{jl}}{\partial x_l}\quad\text{for }1\leq j\leq3\,.
\end{equation}
By taking time derivative in \eqref{E:3.25} and using \eqref{E:3.24} one gets
\begin{equation}\label{E:3.34}
\frac{\partial\sigma_{jl}}{\partial t}=\frac{1}{2\rho}\sum_{m,n=1}^3 c_{jl,mn}
\left(\frac{\partial P_m}{\partial x_n}+\frac{\partial P_n}{\partial x_m}\right)\,.
\end{equation}
By direct verification, \eqref{E:3.33}, \eqref{E:3.34} can be rewritten the form \eqref{E:3.12} with
the following identifications (recall that $\sigma_{jl}=\sigma_{lj}$): $\bsPsi(x,t)$ is the $9\times1$ vector
\begin{equation}\label{E:3.35}
\bsPsi(x,t)=\left[\begin{matrix}
\boldsymbol\Sigma(x,t) \\ \bsP(x,t)
\end{matrix}\right]\,,
\end{equation}
$\bsE(x)$ is the $9\times9$ strictly positive real matrix
\begin{equation}\label{E:3.36}
\bsE(x)=
\left[\begin{matrix}
\rho(x) \bsC(x)^{-1} & 0\\
0 & \bf 1_3
\end{matrix}\right]\,,
\end{equation}
and 
\begin{equation}\label{E:3.37}
\bsAj(x)=\bsAj=
-\left[\begin{matrix}
0 & \bsaj\\
(\bsaj)^T & 0
\end{matrix}\right]
\end{equation}
where $\bsaj$ are the $6\times3$ matrices given by
\begin{equation}\label{E:3.38}
\boldsymbol{a^1}=
\left[\begin{matrix}
1&0&0\\
0&0&0\\
0&0&0\\
0&1&0\\
0&0&0\\
0&0&1
\end{matrix}\right]\,,\quad
\boldsymbol{a^2}=
\left[\begin{matrix}
0&0&0\\
0&1&0\\
0&0&0\\
1&0&0\\
0&0&1\\
0&0&0
\end{matrix}\right]\,,\quad
\boldsymbol{a^3}=
\left[\begin{matrix}
0&0&0\\
0&0&0\\
0&0&1\\
0&0&0\\
0&1&0\\
1&0&0
\end{matrix}\right]\,.
\end{equation}
As in the case of Maxwell equations one can write $\bsM$ in terms of $\bsC$ and $\bsaj$:
\begin{equation}\label{E:vel}
M_{jl}=\frac{2}{\rho}\text{Tr}\big(\bsC^{1/2}\bsaj(\bsC^{1/2}\bsal)^T\big).
\end{equation}

In  this general case the matrix $\bsM(x)$ is a complicated function of $\rho(x)$ and $\bsC(x)$, 
and so the physical heuristics behind Theorem~\ref{T:1} remains unclear. However, as in the case
of Maxwell's equations above, the picture becomes clear in the isotropic case.
More precisely, on one hand the stiffness matrix $\bsC$ takes the simple form
\begin{equation}\label{E:3.39}
\bsC=
\left[\begin{matrix}
K+\frac{4\mu}{3} & K-\frac{2\mu}{3} & K-\frac{2\mu}{3} & & & \\
K-\frac{2\mu}{3} & K+\frac{4\mu}{3} & K-\frac{2\mu}{3} & &0& \\
K-\frac{2\mu}{3} & K-\frac{2\mu}{3} & K+\frac{4\mu}{3} & & & \\
 & & & \mu & & \\
 & 0 & & & \mu & \\
 & & & & & \mu \\  
\end{matrix}\right]
\end{equation}
where $K,\mu>0$ are the so-called bulk and shear moduli, respectively. On the other hand, 
there are two characteristic velocities of propagation, namely compressional and
shear velocities, $v_P$ and $v_S$ respectively, which can be expressed
in terms of $K,\mu,$ and $\rho$:
\begin{equation}\label{E:3.40}
v_P^2=\frac{1}{\rho}\left(K+\frac{4\mu}{3}\right)\quad\text{and}\quad v_S^2=\frac{\mu}{\rho}\,.
\end{equation}

Now a straightforward, but somewhat tedious, computation gives:
\begin{equation}\label{E:3.41}
M_{jl}
=\frac{2}{\rho}\left(K+\frac{10\mu}{3}\right)\,\delta_{jl}.
\end{equation}
Combining \eqref{E:3.40} and \eqref{E:3.41} one obtains
\begin{equation}\label{E:3.42}
\bsM(x)=2\big(v_P(x)^2+2v_S(x)^2\big)\bf {1}\,,
\end{equation}
i.e. $M$ can be expressed solely in terms of propagation velocities of the system.
Hence, as in the case of Maxwell's equations, the essential self-adjointness of $\ID_0$
(i.e. the fact that the total energy is conserved in time) is ensured by sufficient
decay of both $v_P$ and $v
_S$ as $x\to\partial\Om$. In particular,
Corollary~\ref{C:4} holds true with $c(x)$ replaced by $\max\big\{v_P(x),v_S(x)\big\}$.
\end{example}


\end{document}